\documentclass[11pt,a4paper,fleqn]{scrartcl}
\usepackage{amssymb}
\usepackage{graphicx}
\usepackage{amsmath}
\usepackage{lineno}
\usepackage{xcolor}
\usepackage[normalem]{ulem}
\newtheorem{lemma}{Lemma}
\newtheorem{theorem}{Theorem}
\newtheorem{corollary}{Corollary}
\newtheorem{remark}{Remark}
\newtheorem{definition}{Definition}

\begin{document}

\title{Positive and Negative Determinant Strategies in Repeated Games with Behavior-Value Inconsistency}

\author{
	\small{
	Yuan Liu$^{1,2,3}$,
	Yakun Wang$^{1,2,4}$,
	Bin Wu$^{1,2,*}$
	}
}
\date{
	\footnotesize{
		\(^1\) School of Mathematical Sciences, Beijing University of Posts and Telecommunications, Beijing, China\\
		\(^2\) Key Laboratory of Mathematics and Information Networks (Beijing University of Posts and Telecommunications), Ministry of Education, Beijing, China\\
		\(^3\) Department of Theoretical Biology, Max Planck Institute for Evolutionary Biology, Plön, Germany\\
		\(^4\) Department of Engineering, Universitat Pompeu Fabra, Barcelona, Spain\\
		\(^*\) Bin Wu: bin.wu@bupt.edu.cn
	}
}

\maketitle

\begin{abstract}
Direct reciprocity, based on the repeated interactions, is a fundamental mechanism to promote cooperation.
Zero-determinant (ZD) strategies have opened an avenue for unilateral payoff control. 
However, previous studies neglect internal costs provided what agents do differ from what agents think, which is crucial for decision making of intelligent agents. 
Motivated by this, we establish a game theoretical framework by assuming that an individual pays the internal cost if the behavior is inconsistent with the internal thought. 
We prove that ZD strategy does not exist if the cost via behavior-value inconsistency is present.
Instead, we find a new class of repeated strategies that enforce a unilateral payoff control, which is termed as positive/negative determinant strategy. 
The found strategy allows an individual to enforce an affine combination of two individuals' average payoffs above/below zero.
Consequently, a focal individual is able to unilaterally control the opponent's payoff below a given value via negative determinant strategy, and a focal individual is able to get more payoff than the opponent via positive determinant strategy.
We also find that the control ability of positive/negative determinant strategies is better off than that of ZD strategies.
Our work highlights the importance of inconsistency between the behavior and value on payoff control, which is typically absent in classic ZD strategies. 
\end{abstract}

\section{Introduction}
\label{sec:introduction}
Cooperation is a non trivial phenomenon in the sight of evolutionary theory. Recent decades have seen a progress in the mechanisms to promote cooperation. 
Direct reciprocity, based on the repeated encounters, is a fundamental mechanism to promote cooperation \cite{Nowak_2006, Ohtsuki_2007, Nowak_1993, mailath2006repeated, aumann1995repeated}.  
The repeated Prisoner's Dilemma game is typically used to model the direct reciprocity.
Axelrod discovered the ``winning strategy" that an individual cooperates in the first round and repeats the opponent's behavior in the previous round, i.e., tit-for-tat \cite{Axelrod_1984}.
Tit-for-tat strategy has the property that if a player adopts such strategy, both players have the same payoff in the long run, regardless of the strategy chosen by the opponent.
Boerlijst et al. discovered a general class of such strategies named by equalizer strategies \cite{Boerlijst_1997}.
In 2012, Press and Dyson discovered a more general class of strategies, which allow a player to enforce an affine relationship between her own payoff and the opponent’s payoff, regardless of the opponent’s strategy \cite{Press_2012}.
This is referred as zero-determinant strategies (ZD strategies), since they are obtained via the manipulation of matrices without changing its determinant. 

ZD strategy opens an avenue for unilateral payoff control \cite{akin2016iterated, Hao_2015, pan2015zero, mamiya2020zero, BOERLIJST1997281, chen2023outlearning, ueda2021memory, adami2013evolutionary, cheng2024design}. 
Using ZD strategies, the focal individual is able to enforce an extortionate share of payoffs.
Furthermore, the control can be unilateral. Last but not the least, the unilateral payoff control via ZD strategy makes use of almost the least information, that is, only the focal and opponent’s behavior in the current round without payoff evaluation and without recursively updating one’s response. It is a once-for-all strategy designed at the beginning of the game.
Beyond expectations, such counterintuitive unilateral payoff control is ubiquitous ranging from pairwise two-strategy games to multi-player multi-strategy games \cite{Govaert2021, Ueda_2022, hilbe2017memory, liu2022environmental, GOVAERT2019150}. 
Using the payoff control framework, a controller can restrict the relation between her and the opponent’s payoffs to an arbitrary region with linear boundaries, as long as the control objective is feasible. 
To be more more precisely, a single player can (i) unilaterally determine the maximum and minimum values of the opponent’s possible payoffs; or (ii) always win the game no matter what the opponent’s strategy is; or (iii) control the evolutionary route of the game, as long as the opponent is rational and self-optimizing \cite{hao2018payoff}.
The controller can enforce the game to finally converge either to a mutual-cooperation equilibrium or to any feasible equilibrium that she wishes.
As a result, it achieves a variety of control goals, including:
adjusting the payoff relationships between individuals to bring the system to a stable state;
influencing the collective welfare of the entire system through appropriate strategy adjustments;
adjusting strategies appropriately to enhance the system's adaptability to external disturbances or changes.

Payoff control has recently received increasingly attention in artificial intelligence (AI) \cite{hao2018payoff, tang2024incentive}. 
Previous payoff control strategies are only to be found in the classical games, where action gives rise to the entire strategy set. 
In artificial intelligence, agents are likely to adjust their actions (strategies) based on past experiences, typically using reinforcement learning to handle complex environments and strategic opponents. 
It is typically assumed that one agent is the payoff controller, using a pre-determined payoff control strategy, while the other is the reinforcement learner, adjusting their strategy according to reinforcement learning dynamics \cite{shi2025payoff}. 
It mirrors the interaction between humans and robots. The control question here is how to manipulate the robot to have a lower payoff than humans.
Intelligent decision makers including AI are likely to have a value inside, such as emotion \cite{assunccao2022overview}, intention \cite{zhao2024learning}, ethical implications \cite{elendu2023ethical} and so on, which can be crucial for decision making. 

Internal thought is a latent state or unobservable variable that influences a player's decision-making process, which classical decision making theory neglects. 
Internal thoughts are shaped by previous behaviors of both players and the focal individual’s internal thoughts. 
This idea has been present in collective decision-makings, where individuals' decisions are not made in isolation; but evolve in response to the actions and opinions of others \cite{zino2020two, martins2008continuous, gargiulo2012influence}. 
An individual's opinion evolves not only via sharing opinions with others but also by observing the actions of those around them.
This highlights the crutial role of internal thought in shaping decision-making. 
As Charles Horton Cooley says \cite{cooley2017human}, ``Each to each a looking-glass reflects the other that doth pass." 
Thus, the internal thought is indeed part of the strategy, influenced by the behaviors they observe.
However, this value-behavior framework is absent in classical game theory.

Motivated by this, we propose that behavior of both individuals influence the internal thought in the future. 
In this case, individuals whose action differs from its value have some extra costs, say psychological costs. 
The value-behavior inconsistency reflects cognitive dissonance, a phenomenon widely observed in  psychological experiments \cite{Asch_1956, Schwarz_2000, 10.2307/24936719}. 
Previous psychological game theory suggests that participants may gain additional psychological effects from their emotions, which can be fully reflected by the payoffs in the game. 
The natural question is whether these psychological effects are captured by simply adjusting the payoff in classical game theory \cite{elster1998emotions}? 
It does not work. The values are endogenous. 
Therefore, merely modifying the payoff function without modeling the dynamics of internal values fails to explain value-behavior inconsistency.
These internal costs are not reflected in static payoffs, in which preferences are fixed \cite{geanakoplos1989psychological}. 
Thus, it is necessary to consider the individual’s value by changing the structure of the game, allowing internal values and psychological mechanisms to be explicitly represented and dynamically modeled. 
For this reason, we assume that each individual holds its internal thoughts (cooperation or defection), which incurs an additional psychological cost (internal cost) provided that its behavior deviates from the internal thought. This modeling captures the psychological effects of the internal cost by generalizing the classical prisoner’s dilemma game to a new two-player four-strategy game. 
How does this inner cost alter the payoff controllability? The AI-human interactions are likely to be more frequent in the near future with AI more and more engaged in control systems. A payoff control with value-behavior inconsistency is thus necessary to investigate such human-AI hybrid systems. 
To this end, our objective is to explore whether unilateral payoff control remains effective in such intricate behavior-value games and, if so, to what extent.

This manuscript is organized as follows. In Section \ref{sec2}, a model of two-player two-strategy repeated game with behavior-value inconsistency is proposed.
In Section \ref{sec3}, the existence of positive/negative determinant strategy is proved. 
In Section \ref{sec6}, some examples are shown in systems with control background.
We conclude this article in Section \ref{sec5}.

\section{Model}
\label{sec2}
Let us consider a two-player symmetric repeated game. 
In contrast with classical games, we introduce the inner state or value, giving rise to a game 
$G = \{N, \{ \mathcal{B}^i\} _{i \in N},  \{  \mathcal{V}^i \} _{i \in N},$  $\{ S^i \} _{i \in N} \},$ where $N=\{1, 2\}$ represents the set of players. 
Each individual $i \in N$ has a set of behaviors $\mathcal{B}^i$, including $C$ (cooperation, $\mathcal{B}^i_1$) and $D$ (defection, $ \mathcal{B}^i_2$); and a set of values $\mathcal{V}^i$, including $\mathbf{C}$ (cooperation, $\mathcal{V}^i_1$) and $\mathbf{D}$ (defection, $\mathcal{V}^i_2$). 
The payoff function $S^i : \bold{\mathcal{B}} \times \mathcal{V}^i \to \mathcal{R}$, in which $\bold{\mathcal{B}} := \prod_{i \in N} \mathcal{B}^i$, determines the payoff received by each individual. 
Here, $\mathcal{R}$ represents the set of real numbers.

Both individuals have four strategies $C\mathbf{C}$, $C\mathbf{D}$, $D\mathbf{C}$, $D\mathbf{D}$.
Either individual can serve as the focal individual.
Without loss of generality, we choose individual $1$ as the focal individual. 
If the internal thought of the focal individual is defection, 
i) she gets $R-\varepsilon_1$ (reward) when both individuals choose to cooperate;
ii) she gets $S-\varepsilon_1$ (sucker payoffs) when she is a cooperator and her opponent is a defector;
iii) she gets $T$ (temptation) when she is a defector and her opponent is a cooperator;
iv) she gets $P$ (punishment) when both individuals choose to defect.
Thus, $\varepsilon_1$ is the cost of having behavior $C$ and internal thought $\mathbf{D}$. 
Analogously, individual pays the internal cost $\varepsilon_2$, if the individual's behavior is defection but the internal thought is cooperation. 
Then the payoff matrix is given by
\begin{equation}
\label{game1}
\bordermatrix{
		     & C\mathbf{C}       &  C\mathbf{D} &  D\mathbf{C} & D\mathbf{D} \cr
 C\mathbf{C} &  R			 & 	R		&S&S\cr
 C\mathbf{D} & R-\varepsilon_1 & R-\varepsilon_1 & S -\varepsilon_1 & S -\varepsilon_1\cr 
 D\mathbf{C} & T-\varepsilon_2 & T-\varepsilon_2 & P-\varepsilon_2 & P-\varepsilon_2\cr 
 D\mathbf{D} & T			 & 	T		&P&P}.
\end{equation}
We assume $T > R > P > S$ and $\varepsilon_1, \varepsilon_2  >0$.

As time evolves, we have a repeated game. 
Here we assume individual $i \in N$ cannot see the internal thought of opponent, that is, she only knows three elements of the current state $\mathcal{B}^1,\mathcal{B}^2,\mathcal{V}^i$. 
Each individual $i \in N$ adopts memory-one strategy (Markovian property), implying that the strategy in the current round depends on $(\mathcal{B}^1 (t - 1),\mathcal{B}^2 (t - 1),\mathcal{V}^i (t - 1))$ from the previous round. 
It leads to $(\bold{p}^i_{\mathcal{B}} (t) ,\bold{q}^i_{\mathcal{V}} (t))$, where 
\begin{equation}
\begin{split}
\bold{p}^i_{\mathcal{B}} (t): \bold{\mathcal{B}} (t - 1) \times \mathcal{V}^i (t - 1) \to \mathcal{B}^i (t),\\
\bold{q}^i_{\mathcal{V}} (t): \bold{\mathcal{B}} (t - 1) \times \mathcal{V}^i (t - 1) \to \mathcal{V}^i (t).
\end{split}
\end{equation}
Here, $\bold{\mathcal{B}} (t - 1) := \prod_{i \in N} \mathcal{B}^i (t - 1), t >1$.

We assume that the strategies remain fixed throughout the entire evolutionary process. 
Specifically, for any time any time $t > 1$, we have $\bold{p}^i_{\mathcal{B}} (t) = \bold{p}^i_{\mathcal{B}}$, and $\bold{q}^i_{\mathcal{V}} (t) = \bold{q}^i_{\mathcal{V}}$.
The strategies of individual $i$, $\bold{p}^i_{\mathcal{B}}$ and $\bold{q}^i_{\mathcal{V}}$, are given by
\begin{equation}
\begin{split}
\bold{p}^i_{\mathcal{B}} = (&h^i_{C|CC\mathbf{C}}, h^i_{C|CD\mathbf{C}}, h^i_{C|DC\mathbf{C}}, 
\dots, h^i_{C|DD\mathbf{D}}),\\
\bold{q}^i_{\mathcal{V}} = (&g^i_{\mathbf{C}|CC\mathbf{C}}, g^i_{\mathbf{C}|CD\mathbf{C}}, g^i_{\mathbf{C}|DC\mathbf{C}}, 
 \dots, g^i_{\mathbf{C}|DD\mathbf{D}}).
\end{split}
\end{equation}
The entries $h ^ i _ {C| \mathcal{B}^1 \mathcal{B}^2 \mathcal{V}^ i}$ and $g ^ i _ {\mathbf{C}| \mathcal{B}^1 \mathcal{B}^2 \mathcal{V}^ i}$ are conditional probabilities that the behavior and the internal thought of the individual $i (i \in N)$ are cooperation in the next round, respectively, provided that the current state is ${\mathcal{B}^1\mathcal{B}^2\mathcal{V}^i}$, regardless of the internal thought of the opponent.
We abbreviate it as
\begin{equation}
\begin{split}
\bold{p}^i_{\mathcal{B}} = (&p^i_1, p^i_2, \dots, p^i_8),\\
\bold{q}^i_{\mathcal{V}} = (&q^i_1, q^i_2, \dots, q^i_8).
\end{split}
\end{equation}

The states of the individuals (i.e., behavior and internal thought, $\mathcal{B}^1 \mathcal{B}^2 \mathcal{V}^1 \mathcal{V}^2$) evolve over discrete time.
There are sixteen states in the state space $\mathcal{S}=$ $\{C, D\} ^ 2 \times \{ \mathbf{C}, \mathbf{D} \} ^2$.
We have $h^1_{C|\mathcal{B}^1 \mathcal{B}^2 \mathcal{V}^1 \mathbf{C}} 
= h^1_{C|\mathcal{B}^1 \mathcal{B}^2 \mathcal{V}^1 \mathbf{D}}$, 
$g^1_{C|\mathcal{B}^1 \mathcal{B}^2 \mathcal{V}^1 \mathbf{C}} 
= g^1_{C|\mathcal{B}^1 \mathcal{B}^2 \mathcal{V}^1 \mathbf{D}}$, 
$h^2_{C|\mathcal{B}^1 \mathcal{B}^2 \mathbf{C} \mathcal{V}^2} 
= h^2_{C|\mathcal{B}^1 \mathcal{B}^2 \mathbf{D} \mathcal{V}^2}$, and
$g^2_{C|\mathcal{B}^1 \mathcal{B}^2 \mathbf{C} \mathcal{V}^2} 
= g^2_{C|\mathcal{B}^1 \mathcal{B}^2 \mathbf{D} \mathcal{V}^2}$
The probability that they are still cooperative in their behavior and internal thought in the next round is given by $p^1_1 p^2_1 q^1_1 q^2_1$, if both individuals are cooperative in their behavior and thought in the previous round (See Fig.\ref{model}).
\begin{figure}
	\begin{minipage}{\linewidth} 
	\centerline{\includegraphics[scale=0.45]{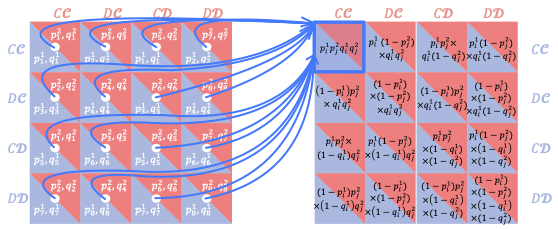}}
	\end{minipage}
	\caption{\textbf{Complexity of strategy set arising from the inconsistency between behavior and value.} 
	Each individual is characterized by a two-dimensional state variable $(\mathcal{B}, \mathcal{V})$. 
$\mathcal{B}$ represents a set of behavior, including $C$ (cooperation) and $D$ (defection); 
and $\mathcal{V}$ is the value that one holds (the internal thought), including $\mathbf{C}$ (cooperation) and $\mathbf{D}$ (defection).
There are sixteen states in the state space $\mathcal{S}=$ $\{C, D\} ^ 2 \times \{ \mathbf{C}, \mathbf{D} \} ^2$.
Both individuals take memory-one strategies and cannot see the internal thoughts of their opponent. 
Each individual's strategy is represented by two $8$-dimensional vectors, denoted as $(\bold{p}^i_{\mathcal{B}}, \bold{q}^i_{\mathcal{V}})$ for individual $i$ ($i = 1, 2$). 
Each entry in $\bold{p}^i_{\mathcal{B}}$ ($\bold{q}^i_{\mathcal{V}}$) represents the conditional probability that the behavior (the internal thought) of individual $i$ is cooperation in the next round, based on the $16$ possible outcomes of the previous move. This probability is determined irrespective of the internal thoughts of their opponent.
	}\label{model} 
\end{figure} 
Thus both individual behavior and internal thought are Makovian, which yields a Markovian eco-evolutionary dynamics. 
The transition matrix $\mathbb{P}$ is a square probability matrix of order $16$.
For non-deterministic strategies, (any entry of $\bold{p}^i_{\mathcal{B}}$, $\bold{p}^i_{\mathcal{V}}$ ($i = 1,2$) is neither zero nor one), the Markov chain is aperiodic and irreducible \cite{karlin2014first, karlin1981second}.
This implies that there is a unique stationary distribution, which is determined by the left eigenvector $\bold{v}$ of the unit eigenvalue, i.e., $\bold{v} \mathbb{P} =\bold{v}$.

\section{Existence of Positive and Negative Determinant Strategies}
\label{sec3}
\begin{definition}
\label{def.1}
(Average payoffs of two players)
The average payoffs of individuals $1$ and $2$ are 
\begin{equation}\label{payoff1}
\begin{split}
s^1 &= \bold{v} \cdot \bold{S}^1,\\ 
s^2 &= \bold{v} \cdot \bold{S}^2.
\end{split}
\end{equation}
The vector $\bold{v}$ is the stationary distribution of 16-dimension Markov transition matrix $\mathbb{P}$, $\bold{S}^1$ and $\bold{S}^2$ are payoff vectors of individuals $1$ and $2$ given by 
\begin{equation}
\begin{split}\label{eq8}
\bold{S}^1=(&R, S, T - \varepsilon_2, P - \varepsilon_2, R - \varepsilon_1, S - \varepsilon_1, T, P, \\
&R, S, T - \varepsilon_2, P - \varepsilon_2, R - \varepsilon_1, S - \varepsilon_1, T, P)^T,\\
\bold{S}^2=(&R, T - \varepsilon_2, S, P - \varepsilon_2,R, T - \varepsilon_2, S, P - \varepsilon_2, \\
&R - \varepsilon_1, T, S - \varepsilon_1, P, R - \varepsilon_1, T, S - \varepsilon_1, P)^T.
\end{split}
\end{equation}
The first to the last entry of $\bold{S^1}$ ($\bold{S^2}$) represents the payoff of individual $1$ (individual $2$) in state
$CC\mathbf{C}\mathbf{C}$, $CD\mathbf{C}\mathbf{C}$, $DC\mathbf{C}\mathbf{C}$, $DD\mathbf{C}\mathbf{C}$, $CC\mathbf{D}\mathbf{C}$, $CD\mathbf{D}\mathbf{C}$, $DC\mathbf{D}\mathbf{C}$, $DD\mathbf{D}\mathbf{C}$, $CC\mathbf{C}\mathbf{D}$, $CD\mathbf{C}\mathbf{D}$, $DC\mathbf{C}\mathbf{D}$, $DD\mathbf{C}\mathbf{D}$, $CC\mathbf{D}\mathbf{D}$, $CD\mathbf{D}\mathbf{D}$, $DC\mathbf{D}\mathbf{D}$, $DD\mathbf{D}\mathbf{D}$ $\in \mathcal{S}$, respectively.
For example, individual $1$ gets $S$ (i.e., $\bold{S^1}_{CD\mathbf{C}\mathbf{C}}$) and individual $2$ gets $T - \varepsilon_2$ (i.e., $\bold{S^2}_{CD\mathbf{C}\mathbf{C}}$), if the behavior of individuals $1$ and $2$ are $C$ and $D$ and the internal thought of individuals $1$ and $2$ are $\mathbf{C}$. 
\end{definition}

\begin{remark}
\label{remm}
The average payoff of individual $i (i = 1,2)$ $s^i$ is calculated as the ratio of the accumulated payoffs to the number of rounds.
If the game is played infinitely many rounds, the average payoff $s^i$ is defined as:
\begin{equation}\label{payoff11}
s^i = {\lim_{T \to \infty} \frac{1}{T} \sum_{t = 1}^T S^i(t)}.
\end{equation}
Here $T$ represents the number of rounds, and $S^i(t)$ is the payoff of individual $i$ in round $t$. 
The average payoff is the long-term payoff a player receives per round as the number of rounds approaches infinity. 
For non-deterministic strategies, there is a unique stationary distribution $\bold{v}$ of the transition matrix $\mathbb{P}$, which is determined by $\bold{v} \mathbb{P} =\bold{v}$.
Notably, all entries of the stationary distribution $\bold{v}$ are strictly positive, with $0 < \bold{v}_i < 1$ for all $i=1,2,\dots,16$, and satisfy $\sum_i \bold{v}_i = 1$.
Thus, Eq.\ref{payoff11} is equivalent to Eq.\ref{payoff1}.
\end{remark}

\begin{definition}
\label{def.2}
(Positive/Zero/Negative determinant strategies)
A positive determinant strategy of individual $1$ $(\bold{p}^1_{\mathcal{B}}, \bold{p}^1_{\mathcal{V}})$ is a strategy that satisfies the following condition: For any strategies adopted by the player $2$, there exist constants $\alpha, \beta, \gamma$ (not all $0$) which are unilaterally decided by player $1$, such that two players’ average payoffs fulfill: 
	\begin{equation}
	\alpha s^1 + \beta s^2 +\gamma > 0,
	\end{equation}
where $s^1$ and $s^2$ are average payoffs of individuals $1$ and $2$.

A zero determinant strategy of individual $1$ $(\bold{p}^1_{\mathcal{B}}, \bold{p}^1_{\mathcal{V}})$ is a strategy that satisfies the following condition: For any strategies adopted by the player $2$, there exist constants $\alpha, \beta, \gamma$ (not all $0$) which are unilaterally decided by player $1$, such that two players’ average payoffs fulfill: 
	\begin{equation}
	\alpha s^1 + \beta s^2 +\gamma = 0.
	\end{equation}

A negative determinant strategy of individual $1$ $(\bold{p}^1_{\mathcal{B}}, \bold{p}^1_{\mathcal{V}})$ is a strategy that satisfies the following condition: For any strategies adopted by the player $2$, there exist constants $\alpha, \beta, \gamma$ (not all $0$) which are unilaterally decided by player $1$, such that two players’ average payoffs fulfill: 
	\begin{equation}
	\alpha s^1 + \beta s^2 +\gamma < 0.
	\end{equation}
	
\end{definition}

\begin{remark}
Individuals only know the likelihood to do cooperation based on the behaviors and thoughts of her own and the behavior of the opponent. 
The strategy $(\bold{p}^1_{\mathcal{B}}, \bold{p}^1_{\mathcal{V}})$ is fixed over time.
They do not adjust their behavior/thought based on future expected payoffs, as in reinforcement learning \cite{kaelbling1996reinforcement, ernst2024introduction}.
\end{remark}

\begin{definition}
\label{def.3}
(Press-Dyson vectors \cite{akin2016iterated, Ueda_2022})
For the memory-one strategies of player $i$ ($(\bold{p}^i_{\mathcal{B}}, \bold{p}^i_{\mathcal{V}})$, $i = 1, 2$), the Press–Dyson vectors $(\tilde {\bold{p}^i_{\mathcal{B}}}, \tilde {\bold{p}^i_{\mathcal{V}}})$ are 
\begin{equation}\label{pdvector}
\begin{split}
&\tilde {p^i_{\mathcal{B}}}(C|\mathcal{B}^1 \mathcal{B}^ 2 \mathcal{V}^ 1 \mathcal{V}^ 2) = h ^ i _ {C| \mathcal{B}^1 \mathcal{B}^ 2 \mathcal{V}^ 1 \mathcal{V}^ 2} - \delta ^ i _ {C , \mathcal{B}^ i},\\
&\tilde {q^i_{\mathcal{V}}}(\mathbf{C}|\mathcal{B}^1 \mathcal{B}^ 2 \mathcal{V}^ 1 \mathcal{V}^ 2) = g ^ i _ {\mathbf{C}|\mathcal{B}^1 \mathcal{B}^ 2 \mathcal{V}^ 1 \mathcal{V}^ 2} - \delta ^ i _ {\mathbf{C}, \mathcal{V}^ i},
\end{split}
\end{equation}
where $\delta ^ i _ {C , \mathcal{B}^ i}$ and $\delta ^ i _ {\mathbf{C}, \mathcal{V}^ i}$ are the Kronecker delta. 
The second terms on the right-hand side of Eqs.\ref{pdvector} $\delta ^ i _ {C , \mathcal{B}^ i} (\delta ^ i _ {\mathbf{C}, \mathcal{V}^ i})$ can be regarded as the strategy “Repeat”, which repeats her own behavior/internal thought in the previous round. 
The Press-Dyson vectors are thus regarded as the difference between her own strategy and “Repeat”.
\end{definition}

\begin{lemma}
\label{lem1}
(Existence condition of the zero determinant strategy \cite{Ueda_2022})
For any $\mathcal{B}^ i \in \{ C, D\}, \mathcal{V}^ i \in \{ \mathbf{C}, \mathbf{D}\}$, if there exist constants $\alpha, \beta, \gamma$ which are unilaterally decided by the focal individual $1$,  such that the Press–Dyson vectors satisfy:
\begin{equation}\label{spl13}
a \tilde {\bold{p}^1_{\mathcal{B}}} + b \tilde {\bold{p}^1_{\mathcal{V}}}= \alpha \bold{S}^1 + \beta \bold{S}^2 + \gamma \bold{1},
\end{equation}
$a$ and $b$ are not all zero, $\bold{1}$ is the vector with all components $1$,
then the strategy $(\bold{p}^1_{\mathcal{B}}, \bold{p}^1_{\mathcal{V}})$ is a zero determinant strategy.
\end{lemma}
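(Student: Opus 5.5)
We follow the Press–Dyson/Akin argument, adapted to the 16-state chain in which player $1$ controls two marginal coordinates: her behavior and her internal thought. The plan is to show that each Press–Dyson vector of player $1$ is orthogonal to the stationary distribution $\bold{v}$. The hypothesis \eqref{spl13} then turns this orthogonality into the linear identity $\alpha s^1+\beta s^2+\gamma=0$.

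\emph{Step 1 (behavior marginal).} Fix any strategy of player $2$ and let $\bold{v}$ be the stationary distribution of $\mathbb{P}$, so $\bold{v}\mathbb{P}=\bold{v}$. Consider the event ``player $1$'s behavior is $C$''. Its probability at time $t+1$ equals $\sum_{x\in\mathcal{S}} \bold{v}_x(t)\, h^1_{C|x}$, because player $1$'s behavior update depends only on the current state $x$ (indeed only on $\mathcal{B}^1\mathcal{B}^2\mathcal{V}^1$, and the transition probabilities factorize). At stationarity this probability also equals $\sum_x \bold{v}_x \delta_{C,\mathcal{B}^1(x)}$. Subtracting the two expressions gives
\begin{equation*}
\bold{v}\cdot \tilde{\bold{p}^1_{\mathcal{B}}}=\sum_{x}\bold{v}_x\bigl(h^1_{C|x}-\delta_{C,\mathcal{B}^1(x)}\bigr)=0 .
\end{equation*}
Here $\tilde{\bold{p}^1_{\mathcal{B}}}$ is read as a 16-dimensional vector, with each entry duplicated over the opponent's unobserved thought $\mathcal{V}^2$. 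A cleaner matrix form of the same fact: the column of $\mathbb{P}-I$ obtained by summing the columns of states with $\mathcal{B}^1=C$ equals $\tilde{\bold{p}^1_{\mathcal{B}}}$.

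\emph{Step 2 (thought marginal).} Repeat Step 1 for the event ``player $1$'s thought is $\mathbf{C}$''. The next-round thought is drawn with probability $g^1_{\mathbf{C}|x}$, so the same computation gives $\bold{v}\cdot\tilde{\bold{p}^1_{\mathcal{V}}}=0$. Both identities hold for every strategy of player $2$, and they do not use irreducibility beyond the existence of $\bold{v}$ (Remark~\ref{remm}).

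\emph{Step 3 (conclusion).} Take the inner product of \eqref{spl13} with $\bold{v}$ and use $\bold{v}\cdot\bold{1}=1$ together with Definition~\ref{def.1}:
\begin{equation*}
0=a\,\bold{v}\cdot\tilde{\bold{p}^1_{\mathcal{B}}}+b\,\bold{v}\cdot\tilde{\bold{p}^1_{\mathcal{V}}}=\alpha s^1+\beta s^2+\gamma .
\end{equation*}
Player $1$ alone chooses $\alpha,\beta,\gamma$, and the identity holds for any strategy of player $2$, so Definition~\ref{def.2} is satisfied with equality. It remains to check that $\alpha,\beta,\gamma$ are not all zero, which is required for a genuine zero-determinant relation. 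If all three vanished, \eqref{spl13} would reduce to $a\tilde{\bold{p}^1_{\mathcal{B}}}+b\tilde{\bold{p}^1_{\mathcal{V}}}=0$ with $(a,b)\neq 0$, which makes the statement vacuous. So nontriviality must be assumed as part of the statement, or we must simply note that the conclusion only uses the relation as given.

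The main obstacle is Step 1's bookkeeping. We need the transition probability to factorize as $h^1\cdot h^2\cdot g^1\cdot g^2$, each evaluated at the current state. We also need the marginal sums over the 8 states with $\mathcal{B}^1=C$ (respectively $\mathcal{V}^1=\mathbf{C}$) to collapse correctly, with the indexing of $\tilde{\bold{p}^1}$ lifted to all 16 states in the order used in Definition~\ref{def.1}. Once that collapse is verified, the rest is linear.
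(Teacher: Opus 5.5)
Your argument is correct and is the standard Akin/Press--Dyson orthogonality argument behind the result the paper cites from Ueda: show $\mathbf{v}\cdot\tilde{\mathbf{p}}^1_{\mathcal{B}}=\mathbf{v}\cdot\tilde{\mathbf{p}}^1_{\mathcal{V}}=0$, then pair the hypothesis with $\mathbf{v}$. The paper gives no proof of its own and uses the lemma in exactly this form when it writes $\mathbf{v}\cdot\mathbf{f_1}=0$.

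Your caveat about $(\alpha,\beta,\gamma)\neq 0$ is in fact automatic for the non-deterministic strategies the paper assumes. At $CC\mathbf{C}\mathbf{C}$ the two Press--Dyson entries are both negative, while at $DC\mathbf{C}\mathbf{C}$ they have opposite signs, so the two vectors are linearly independent and $(a,b)\neq 0$ forces $\alpha\mathbf{S}^1+\beta\mathbf{S}^2+\gamma\mathbf{1}\neq 0$.
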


To give our main result, we introduce some notations:
\begin{equation*}
\begin{split}
\bold{U} :=&\alpha \bold{S}^1 + \beta \bold{S}^2 +\gamma \bold{1},\\
\bold{A_1} :=& (U_1, U_2, U_3, U_4, U_5, U_6, U_7, U_8)^T,\\
\bold{A_2} := &(U_9, U_{10}, U_{11}, U_{12}, U_{13}, U_{14}, U_{15}, U_{16})^T,\\
\bold{B_1} := &(-\beta \varepsilon_1, \beta \varepsilon_2 , -\beta \varepsilon_1 , \beta \varepsilon_2 , -\beta \varepsilon_1 ,\beta \varepsilon_2 , -\beta \varepsilon_1 , \beta \varepsilon_2)^T,\\
\bold{B_2}  := &(\beta \varepsilon_1, -\beta \varepsilon_2, \beta \varepsilon_1, -\beta \varepsilon_2 ,\beta \varepsilon_1, -\beta \varepsilon_2 , \beta \varepsilon_1, -\beta \varepsilon_2)^T,\\
\bold{O} := &(0, 0, 0, 0, 0, 0, 0, 0)^T.\\
\end{split}
\end{equation*}

\begin{theorem}
	\label{thm1}
	(Existence condition of the positive/negative determinant strategy)
	For the two-player two-strategy repeated game given by Eq.\ref{game1}, if the following two conditions hold simultaneously \\
	i) there exists a sufficiently large constant $C > 0$, such that $\lvert \varepsilon_1 -\varepsilon_2 \rvert \ge C$,\\
	ii) for any $\mathcal{B}^ i \in \{ C, D\}, \mathcal{V}^ i \in \{ \mathbf{C}, \mathbf{D}\}$, there exist $a$ and $b$, not all zero, such that the Press–Dyson vectors satisfy:
\begin{gather}
\qquad a \tilde {\bold{p}^1_{\mathcal{B}}} + b \tilde {\bold{p}^1_{\mathcal{V}}}= \bold{A_1}, \label{eq.14}\\
\text{or\quad}a \tilde {\bold{p}^1_{\mathcal{B}}} + b \tilde {\bold{p}^1_{\mathcal{V}}}= \bold{A_2} \label{eq.15},
\end{gather}
then a positive or a negative determinant strategy $(\bold{p}^1_{\mathcal{B}}, \bold{p}^1_{\mathcal{V}})$ of player $1$ exists.
\end{theorem}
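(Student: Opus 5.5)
The plan is to adapt the Press--Dyson argument to the sixteen-state chain. First, the key identity. For a non-deterministic strategy pair, the stationary distribution $\bold{v}$ of $\mathbb{P}$ is unique and strictly positive (Remark~\ref{remm}). Writing $\bold{v}\mathbb{P}=\bold{v}$ and summing over the states in which player $1$'s next behavior is $C$ (respectively, next thought is $\mathbf{C}$) gives
\[
\bold{v}\cdot\tilde{\bold{p}^1_{\mathcal{B}}}=0,\qquad \bold{v}\cdot\tilde{\bold{p}^1_{\mathcal{V}}}=0 .
\]
Here the Press--Dyson vectors of Definition~\ref{def.3} are lifted to $\mathcal{R}^{16}$. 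This is the same mechanism that yields Lemma~\ref{lem1}. Hence $\bold{v}\cdot(a\tilde{\bold{p}^1_{\mathcal{B}}}+b\tilde{\bold{p}^1_{\mathcal{V}}})=0$ for every $a,b$.

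Second, I use the structure that player $1$ cannot see $\mathcal{V}^2$. In the ordering of Definition~\ref{def.1}, states $1$--$8$ have $\mathcal{V}^2=\mathbf{C}$ and states $9$--$16$ have $\mathcal{V}^2=\mathbf{D}$, with the same $(\mathcal{B}^1,\mathcal{B}^2,\mathcal{V}^1)$ in matching positions. Player $1$'s strategy and the Kronecker terms depend only on $(\mathcal{B}^1,\mathcal{B}^2,\mathcal{V}^1)$. So the lifted vector $a\tilde{\bold{p}^1_{\mathcal{B}}}+b\tilde{\bold{p}^1_{\mathcal{V}}}$ has the block form $(\bold{w},\bold{w})$ with $\bold{w}\in\mathcal{R}^8$, and condition ii) reads $\bold{w}=\bold{A_1}$ or $\bold{w}=\bold{A_2}$.

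Next I compare the two halves of $\bold{U}$. $\bold{S}^1$ is identical on both halves. Comparing the halves of $\bold{S}^2$ entrywise gives differences $-\varepsilon_1,\varepsilon_2,-\varepsilon_1,\varepsilon_2,\dots$. Hence
\[
\bold{A_2}=\bold{A_1}+\bold{B_1},\qquad \bold{A_1}=\bold{A_2}+\bold{B_2}.
\]
This also explains why no ZD strategy exists when $\beta\neq 0$: $\bold{U}$ itself is never of the form $(\bold{w},\bold{w})$.

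In case $\bold{w}=\bold{A_1}$, we have $\bold{U}=(\bold{A_1},\bold{A_1}+\bold{B_1})$, and $0=\bold{v}\cdot(\bold{A_1},\bold{A_1})$ gives
\[
\alpha s^1+\beta s^2+\gamma=\bold{v}\cdot\bold{U}=\sum_{j=9}^{16}v_j(\bold{B_1})_{j-8}=:W .
\]
Symmetrically, in case $\bold{w}=\bold{A_2}$, $W$ is a weighted sum of $\bold{B_2}$ over the first half. The final step bounds $W$ using only positivity of $\bold{v}$. The entries of $\bold{B_1}$ lie in $\{-\beta\varepsilon_1,\beta\varepsilon_2\}$, and the weights $v_j$, $j\ge 9$, are positive with total mass strictly less than $1$. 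Therefore, whatever player $2$ does,
\[
-|\beta|\max(\varepsilon_1,\varepsilon_2)<W<|\beta|\max(\varepsilon_1,\varepsilon_2),
\]
and analogously for $\bold{B_2}$. Setting $\gamma^{\pm}=\gamma\pm|\beta|\max(\varepsilon_1,\varepsilon_2)$ gives
\[
\alpha s^1+\beta s^2+\gamma^{+}>0,\qquad \alpha s^1+\beta s^2+\gamma^{-}<0 .
\]
These are a positive and a negative determinant strategy in the sense of Definition~\ref{def.2}. All constants depend only on player $1$'s choices and on $\varepsilon_1,\varepsilon_2$, never on player $2$'s strategy.

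The main obstacle is making the inequality genuinely informative rather than trivial, and pinning down the role of condition i). I will use i) to guarantee $\beta\neq 0$ and that $\bold{B_1}$ has entries of opposite sign and unequal magnitude. That would rule out the degenerate case in which $W$ is a constant, which would collapse to a ZD relation, and would fix which of the two one-sided bounds is sharp. I would first check whether the sign pattern of $\bold{B_1}$ combined with $|\varepsilon_1-\varepsilon_2|\ge C$ yields a tighter one-sided bound. If it does not, I fall back on the crude bound above, which already suffices for existence.
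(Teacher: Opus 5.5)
Your argument is correct, apart from a small slip about $\beta=0$ noted at the end. It shares its first two steps with the paper. The identities $\mathbf{v}\cdot\tilde{\mathbf{p}}^1_{\mathcal{B}}=\mathbf{v}\cdot\tilde{\mathbf{p}}^1_{\mathcal{V}}=0$ behind Lemma~\ref{lem1} annihilate the block $(\mathbf{A_1},\mathbf{A_1})$. What remains is $\alpha s^1+\beta s^2+\gamma=W=\beta(\varepsilon_2X-\varepsilon_1Y)$, where $X,Y$ are the stationary probabilities of $(\mathcal{B}^2,\mathcal{V}^2)=(D,\mathbf{D})$ and $(C,\mathbf{D})$; this is Eq.~\eqref{pnstra}.

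The routes diverge at the last step. The paper keeps the original $\gamma$ and asserts that $W$ has a definite sign, fixed by $\operatorname{sign}\beta$ and $\operatorname{sign}(\varepsilon_2-\varepsilon_1)$, once $|\varepsilon_1-\varepsilon_2|\ge C$. You instead bound $|W|$ uniformly and absorb the bound into $\gamma$.

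\begin{itemize}
\item \emph{What your route buys.} It is rigorous, but it proves very little. Payoffs are bounded, so \emph{every} strategy of player~1 satisfies Definition~\ref{def.2} after a suitable shift of $\gamma$; even $\alpha=\beta=0$, $\gamma=1$ qualifies. Condition i) is never used, and ii) only sharpens the shift.
\item \emph{What the paper's route would buy.} It would give genuine one-sided control at the original $\gamma$, and this is the only place condition i) enters. But that step is not uniform over player~2's strategies, and it is false. Let player~2 ignore the state, playing $C$ with probability $\delta$ and holding $\mathbf{C}$ with probability $q$, both in $(0,1)$. Then $W=\beta(1-q)\left(\varepsilon_2(1-\delta)-\varepsilon_1\delta\right)$. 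This changes sign at $\delta=\varepsilon_2/(\varepsilon_1+\varepsilon_2)\in(0,1)$, however large $|\varepsilon_1-\varepsilon_2|$ is. For instance, in Example~1 a player~2 who almost always plays $C$ with thought $\mathbf{C}$ gets $s^2\to 1.01>1$.
\end{itemize}

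So drop the plan in your last paragraph; no tighter one-sided bound at the original $\gamma$ exists. In the $\mathbf{A_1}$ case, $W$ is a convex combination, with strictly positive weights, of $\beta\varepsilon_2$, $-\beta\varepsilon_1$ and $0$. Hence $W$ lies strictly between $-\beta\varepsilon_1$ and $\beta\varepsilon_2$, and the state-independent strategies above approach both ends.

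\begin{itemize}
\item The sharpest shifts (for $\beta>0$; the other cases are analogous) are therefore $\gamma+\beta\varepsilon_1$ and $\gamma-\beta\varepsilon_2$.
\item The uncontrolled band has width $|\beta|(\varepsilon_1+\varepsilon_2)$. It is the sum of the costs that matters, not their difference, so i) plays no role.
\item Condition i) also cannot force $\beta\neq 0$, since $\beta$ is player~1's choice. For $\beta=0$ your strict bound reads $0<W<0$. In that case $W=0$, and shifting $\gamma$ by $\pm 1$ suffices.
\item Your side remark that $\mathbf{U}\neq(\mathbf{w},\mathbf{w})$ only shows that the sufficient condition of Lemma~\ref{lem1} fails. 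It does not show that no ZD strategy exists.
\end{itemize}
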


\noindent \textbf{Proof.}
Based on condition ii), there are two ways to split $\bold{U}$ (an affine combination of payoff vectors ($\bold{S}^1$, $\bold{S}^2$)) into two parts:
\begin{equation}\label{U1}
\bold{U}=\alpha \bold{S}^1 + \beta \bold{S}^2 +\gamma \bold{1}=
	\begin{pmatrix}
	\bold{A_1}  \\
	\bold{A_1} 
	\end{pmatrix}
	 +
	\begin{pmatrix}
	\bold{O}  \\
	\bold{B_1} 
	\end{pmatrix},
\end{equation}
and 
\begin{equation}\label{U2}
\bold{U}=\alpha \bold{S}^1 + \beta \bold{S}^2 +\gamma \bold{1}=
	\begin{pmatrix}
	\bold{A_2}  \\
	\bold{A_2} 
	\end{pmatrix}
	 +
	\begin{pmatrix}
	\bold{B_2}  \\
	\bold{O} 
	\end{pmatrix}.
\end{equation}
Here, we show the proof of the existence of positive/negative determinant strategy under the first way, i.e., Eq.\ref{U1}.
If Eq.\ref{U2} is fulfilling, the proof can be given in a similar way.

\textbf{First, we split any affine combination of average payoffs into two parts.}
We consider
\begin{equation}\label{spl15}
\bold{U}=\alpha \bold{S}^1 + \beta \bold{S}^2 +\gamma \bold{1}=
\underbrace{	
	\begin{pmatrix}
	\bold{A_1}  \\
	\bold{A_1} 
	\end{pmatrix} }_{\bold{f_1}}
	 +
\underbrace{	
	\begin{pmatrix}
	\bold{O}  \\
	\bold{B_1} 
	\end{pmatrix} }_{\bold{f_2}}.
\end{equation}
Based on Definition.\ref{def.1}, the average payoffs of individuals $1$ and $2$ are 
$
s^1 = \bold{v} \cdot \bold{S}^1,
s^2 = \bold{v} \cdot \bold{S}^2.
$
For any affine combination of payoffs, we have
\begin{equation}
\alpha s^1 + \beta s^2 +\gamma =\alpha \bold{v} \cdot  \bold{S}^1 + \beta \bold{v} \cdot  \bold{S}^2 + \gamma \bold{v} \cdot  \bold{1}
=\bold{v} \cdot (\alpha \bold{S}^1 + \beta \bold{S}^2 +\gamma \bold{1}).
\end{equation}
Based on Eq.\ref{spl15}, we have
\begin{equation}
\label{payoff}
\alpha s^1 + \beta s^2 +\gamma
=\bold{v} \cdot (\bold{f_1} + \bold{f_2})
=\bold{v} \cdot \bold{f_1} + \bold{v} \cdot  \bold{f_2}.
\end{equation}

\textbf{Second, we show that the sign of the affine combination of average payoffs is determined by $\beta$, $\varepsilon_1$ and $\varepsilon_2$.}
Based on condition ii), there exist $a$ and $b$ which are not all zero, such that Press-Dyson vectors $(\tilde {\bold{p}^i_{\mathcal{B}}}, \tilde {\bold{p}^i_{\mathcal{V}}})$ satisfy
\begin{equation}
a \tilde {\bold{p}^1_{\mathcal{B}}} + b \tilde {\bold{p}^1_{\mathcal{V}}} = \bold{f_1},
\end{equation}
thus $\bold{v} \cdot \bold{f_1} = 0$ holds according to Lemma.\ref{lem1}.

Here, $v_i$ is the $i_{th}$ entry of the stationary distribution $\bold{v}$. If both individuals adopt non-deterministic strategies, then $v_i > 0$ and $\sum_{i=1}^{16} v_i = 1$.
We have 
\begin{equation*}
\begin{split}
\bold{v} \cdot \bold{f_2}
=&\sum_{i=1}^8 0
+ \beta \varepsilon_2 \sum_{i=10, 12, 14, 16} v_i - \beta \varepsilon_1 \sum_{i=9, 11, 13, 15} v_i\\
=& \beta \varepsilon_2 \sum_{i=10, 12, 14, 16} v_{i} - \beta \varepsilon_1 \sum_{i=9, 11, 13, 15} v_{i}.
\end{split}
\end{equation*}
Thus, 
\begin{equation}
\label{pnstra}
\alpha s^1 + \beta s^2 + \gamma=
 \beta \varepsilon_2 \sum_{i=10, 12, 14, 16} v_{i} - \beta \varepsilon_1 \sum_{i=9, 11, 13, 15} v_{i}.
\end{equation}

\textbf{Finally, we give the existence conditions of the positive/negative determinant strategy.}
Therefore,
i) if $\beta > 0$, and there exists a sufficiently large constant $C > 0$, such that $\varepsilon_2 -\varepsilon_1  \ge C$, then $\alpha s^1 + \beta s^2 + \gamma>0$.
The memory-one strategy $(\bold{p}^1_{\mathcal{B}}, \bold{p}^1_{\mathcal{V}})$ of player $i$ is a positive determinant strategy.\\
ii) if $\beta > 0$, and there exists a sufficiently large constant $C > 0$, such that $\varepsilon_1 -\varepsilon_2  \ge C$, then $\alpha s^1 + \beta s^2 + \gamma < 0$.
The memory-one strategy $(\bold{p}^1_{\mathcal{B}}, \bold{p}^1_{\mathcal{V}})$ of player $i$ is a negative determinant strategy.\\
iii) if $\beta < 0$, and there exists a sufficiently large constant $C > 0$, such that $\varepsilon_2 -\varepsilon_1  \ge C$, then $\alpha s^1 + \beta s^2 + \gamma < 0$.
The memory-one strategy $(\bold{p}^1_{\mathcal{B}}, \bold{p}^1_{\mathcal{V}})$ of player $i$ is a negative determinant strategy.\\
iv) if $\beta < 0$, and there exists a sufficiently large constant $C > 0$, such that $\varepsilon_1 -\varepsilon_2  \ge C$, then $\alpha s^1 + \beta s^2 + \gamma > 0$.
The memory-one strategy $(\bold{p}^1_{\mathcal{B}}, \bold{p}^1_{\mathcal{V}})$ of player $i$ is a positive determinant strategy.

\begin{remark}
\textbf{Role of Lemma.\ref{lem1} in the proof of Theorem.\ref{thm1}.}
Although Theorem.\ref{thm1} is on the existence of a positive or negative determinant strategy, we employ Lemma.\ref{lem1} —originally derived for the zero-determinant (ZD) strategy—as a key step in the proof.
The main novelty lies in decomposing an affine relationship $\bold{U} =\alpha \bold{S}^1 + \beta \bold{S}^2 +\gamma \bold{1}$ into two parts (See Eqs.\ref{U1}, \ref{U2}), as $\bold{U} \notin \operatorname{span} \{\tilde {p^1_{\mathcal{B}}}, \tilde {p^1_{\mathcal{V}}} \}$. 
The first part (denoted as $\bold{f_1}$) is expressed as an affine combination of Press-Dyson vectors (See Eqs.\ref{eq.14}, \ref{eq.15}), leading to a zero determinant (via Lemma.\ref{lem1}). 
This part corresponds to a ZD-like construction.
The second part ($\bold{f_2}$) represents the residual component. 
It determines the sign of the determinant, which is influenced by the parameters $\varepsilon_1$, $\varepsilon_2$, and $\beta$. 
In particular, the term is not vanishing if the psychological costs are present and not equal with each other.
This demonstrates the existence of a positive or negative determinant strategy.
\end{remark}

\begin{remark}
If the conditions of Theorem.\ref{thm1} are satisfied, a positive (negative) determinant strategy exists based on the split of the affine relationship $\bold{U}=\alpha \bold{S}^1 + \beta \bold{S}^2 +\gamma \bold{1}$, that is, Eqs.\ref{eq.14}, \ref{eq.15}.
On the other hand, if we change the factors $a$ and $b$, there can be countless positive (negative) determinant strategies.
\end{remark}

\begin{remark}
The proof is valid for a game with $n$ players, $m$ strategies, $s$ internal thoughts. 
The dimension of the transition probability matrix is $m^n s^n$ in this general case.
\end{remark}

\begin{theorem}\label{thm2}
(The equivalent form of Theorem.\ref{thm1})
For the two-player two-strategy repeated game given by Eq.\ref{game1}, if\\
i) there exists a sufficiently large constant $C > 0$, such that $\lvert \varepsilon_1 -\varepsilon_2 \rvert \ge C$,\\
	ii) there exist constants $\alpha, \beta, \gamma$, not all zero, which are unilaterally decided by the focal player $1$, such that $\forall \mathcal{B}^2 \in \{ C, D \}, \mathcal{V}^2 \in \{ \mathbf{C}, \mathbf{D} \}$,
one of the following conditions is satisfied: 
\begin{small}
\begin{equation}
\label{inl.1}
\begin{split}
(a) \ &U_{C \mathcal{B}^2 \mathbf{C} \mathcal{V}^2} \ge 0,
	    U_{D \mathcal{B}^2 \mathbf{D} \mathcal{V}^2} \leq 0,
	 U_{D \mathcal{B}^2 \mathbf{C} \mathcal{V}^2} \ge 0,
	 U_{C \mathcal{B}^2 \mathbf{D} \mathcal{V}^2} \leq 0, \\
(b) \ &U_{C \mathcal{B}^2 \mathbf{C} \mathcal{V}^2} \ge 0,
	 U_{D \mathcal{B}^2 \mathbf{D} \mathcal{V}^2} \leq 0,
	 U_{D \mathcal{B}^2 \mathbf{C} \mathcal{V}^2} \leq 0,
	 U_{C \mathcal{B}^2 \mathbf{D} \mathcal{V}^2} \ge 0, \\
(c) \ &U_{C \mathcal{B}^2 \mathbf{C} \mathcal{V}^2} \leq 0,
	 U_{D \mathcal{B}^2 \mathbf{D} \mathcal{V}^2} \ge 0,
	 U_{D \mathcal{B}^2 \mathbf{C} \mathcal{V}^2} \ge 0,
	 U_{C \mathcal{B}^2 \mathbf{D} \mathcal{V}^2} \leq 0, \\
(d) \ &U_{C \mathcal{B}^2 \mathbf{C} \mathcal{V}^2} \leq 0,
	 U_{D \mathcal{B}^2 \mathbf{D} \mathcal{V}^2} \ge 0,
	 U_{D \mathcal{B}^2 \mathbf{C} \mathcal{V}^2} \leq 0,
	 U_{C \mathcal{B}^2 \mathbf{D} \mathcal{V}^2} \ge 0, \\
\end{split}
\end{equation}
\end{small}
where $U_{\mathcal{B}^1 \mathcal{B}^2 \mathcal{V}^ 1 \mathcal{V}^2}:=$ $\alpha \bold{S}^1_{\mathcal{B}^1 \mathcal{B}^2 \mathcal{V}^ 1 \mathcal{V}^2} + \beta \bold{S}^2_{\mathcal{B}^1 \mathcal{B}^2 \mathcal{V}^ 1 \mathcal{V}^2} + \gamma \bold{1}$, $\mathcal{B}^1 \mathcal{B}^2 \mathcal{V}^ 1 \mathcal{V}^2 \in \mathcal{S}$,
then a positive or a negative determinant strategy $(\bold{p}^1_{\mathcal{B}}, \bold{p}^1_{\mathcal{V}})$ of player $1$ exists.
\end{theorem}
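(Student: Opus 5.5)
The plan is to reduce Theorem~\ref{thm2} to Theorem~\ref{thm1}. Condition i) is identical in both statements, so the task is to show that the sign conditions \eqref{inl.1} on $\bold{U}$ allow us to find $a,b$, not both zero, and genuine probability vectors $\bold{p}^1_{\mathcal{B}},\bold{q}^1_{\mathcal{V}}$ such that Eq.~\ref{eq.14} holds (or Eq.~\ref{eq.15}, by the symmetric argument with the $\mathcal{V}^2=\mathbf{D}$ block). Theorem~\ref{thm1} then yields the positive or negative determinant strategy.

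\textbf{Step 1: structure of the Press--Dyson vectors.} Player $1$'s strategy depends only on $(\mathcal{B}^1,\mathcal{B}^2,\mathcal{V}^1)$. Hence $\tilde{\bold{p}^1_{\mathcal{B}}}$ and $\tilde{\bold{p}^1_{\mathcal{V}}}$ take the same values on the two halves of the state list, which differ only in $\mathcal{V}^2$. This is exactly why the target is $(\bold{A_1},\bold{A_1})$, with the $\mathcal{V}^2$-dependence pushed into the residual $\bold{f_2}$. Equation~\ref{eq.14} therefore reduces to eight scalar equations, one for each triple $(\mathcal{B}^1,\mathcal{B}^2,\mathcal{V}^1)$:
\[
a\,(h-\delta_{C,\mathcal{B}^1}) + b\,(g-\delta_{\mathbf{C},\mathcal{V}^1}) = U_{\mathcal{B}^1\mathcal{B}^2\mathcal{V}^1\mathbf{C}} .
\]
The sign information we use is the following. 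The term $h-\delta_{C,\mathcal{B}^1}$ ranges over $[-1,0]$ when $\mathcal{B}^1=C$ and over $[0,1]$ when $\mathcal{B}^1=D$. Likewise, $g-\delta_{\mathbf{C},\mathcal{V}^1}$ ranges over $[-1,0]$ when $\mathcal{V}^1=\mathbf{C}$ and over $[0,1]$ when $\mathcal{V}^1=\mathbf{D}$.

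\textbf{Step 2: case analysis with $a=b$.} In cases (a) and (b) we take $a=b=-\lambda$ with $\lambda>0$.
\begin{itemize}
\item For the consistent state $C\mathbf{C}$ the left-hand side equals $\lambda(2-h-g)\in[0,2\lambda]$. This matches $U_{C\mathcal{B}^2\mathbf{C}\mathcal{V}^2}\ge 0$.
\item For $D\mathbf{D}$ the left-hand side equals $-\lambda(h+g)\in[-2\lambda,0]$. This matches $U_{D\mathcal{B}^2\mathbf{D}\mathcal{V}^2}\le 0$.
\item For the inconsistent states $D\mathbf{C}$ and $C\mathbf{D}$ the left-hand side ranges over $[-\lambda,\lambda]$. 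The prescribed signs there are compatible with any sign, so they are automatically attainable.
\end{itemize}
Cases (c) and (d) are handled the same way with $a=b=\lambda>0$, which reverses the diagonal signs. Choosing $\lambda\ge\max|U|$ makes every equation solvable with $h,g\in[0,1]$, for example by splitting $U$ evenly between the two terms.

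\textbf{Step 3: consistency across $\mathcal{V}^2$.} Condition ii) of Theorem~\ref{thm2} is imposed for both values of $\mathcal{V}^2$. This guarantees that the required signs hold for whichever half, $\bold{A_1}$ or $\bold{A_2}$, is used, so the alternative \eqref{eq.15} is covered as well. Theorem~\ref{thm1} (through Lemma~\ref{lem1} applied to $\bold{f_1}$, with the sign coming from $\beta$, $\varepsilon_1$, $\varepsilon_2$ via Eq.~\ref{pnstra}) then gives the conclusion. For the converse direction of the claimed equivalence, the ranges in Step~1 show that any solution of \eqref{eq.14} with $a,b$ of equal sign forces the diagonal sign pattern. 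I would only check this direction briefly, since the statement as written asks for the implication.

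\textbf{Main obstacle.} The delicate point is non-determinism. Theorem~\ref{thm1} and Remark~\ref{remm} use strict positivity of $\bold{v}$, which needs $h,g\in(0,1)$. If some diagonal entry of $U$ vanishes, the even split forces $h,g\in\{0,1\}$ there. I would first try to repair this by taking $\lambda$ strictly larger than $\max|U|$ and splitting unevenly, which works for every entry except the boundary zeros. For those, I would argue by continuity of the stationary distribution, or simply assume that the inequalities hold strictly on the consistent states.
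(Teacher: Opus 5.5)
Your reduction is correct and shares the paper's skeleton: make Eq.~\eqref{eq.14} or \eqref{eq.15} solvable, then invoke Theorem~\ref{thm1}. The key step, however, is done by a different route. The paper takes the strategy as given and uses the signs in rows 1 and 3 to get $r(M)=2$. It notes that $\mathbf{A_1}$ lies in the column span of $M$ iff $r(\tilde M)=2$, and then asserts without argument that this is equivalent to \eqref{inl.1}. The rank formulation shows that Theorem~\ref{thm2} restates the span hypothesis of Theorem~\ref{thm1}. But for a fixed $M$ the span condition is an equality constraint on $\mathbf{A_1}$, not a sign condition. Sign conditions can only describe solvability for \emph{some} choice of $h,g\in[0,1]$.

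Your choice $a=b=\mp\lambda$ with $\lambda\ge\max|U|$ and an even split supplies exactly that construction and exhibits the strategy. It also shows that equal-sign $(a,b)$ uses only the consistent-state signs ($C\mathbf C$ and $D\mathbf D$ for player~1). The inconsistent-state half of each of (a)--(d) is precisely what the opposite-sign choice needs: $a>0>b$ for (a) and (c), $a<0<b$ for (b) and (d). That choice leaves the consistent states free. So each case of \eqref{inl.1} is a conjunction of two separately sufficient conditions. Consequently the ``if and only if'' in the remark after the theorem fails in the converse direction; do not try to prove it.

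Smaller points:

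(1) Your single sign of $\lambda$ presupposes that the cases used for $\mathcal B^2=C$ and $\mathcal B^2=D$ share their consistent-state pattern; say so. A mix such as (a)/(c) needs $a>0>b$ instead, and a mix such as (a)/(d) in general leaves no admissible sign choice for $(a,b)$.

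(2) Continuity is not the right repair for boundary zeros. Moving $h$ or $g$ off $1$ at a consistent state with $U=0$ breaks Eq.~\eqref{eq.14} there, so the exact identity $\mathbf v\cdot\mathbf f_1=0$ is lost. An $O(\eta)$ error cannot protect a strict sign without a uniform lower bound on $|\mathbf v\cdot\mathbf f_2|$. If the zeros sit only on consistent states, switch to the opposite-sign construction, which permits interior values there. Otherwise note that $\mathbf v\cdot\tilde{\mathbf p}=0$ holds for every stationary distribution, so $0/1$ entries only threaten uniqueness and positivity of $\mathbf v$.

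(3) Your consistent-state conditions force $\beta\neq 0$. For $\beta=0$ they give $\alpha(T-S)\le 0\le\alpha(R-P)$ or the reverse, hence $\alpha=\gamma=0$. The $\beta>0$/$\beta<0$ split in the proof of Theorem~\ref{thm1} silently needs this.

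(4) Both arguments are only as strong as Theorem~\ref{thm1}. By Eq.~\eqref{pnstra} the residual equals $\beta(\varepsilon_2 w_{D\mathbf D}-\varepsilon_1 w_{C\mathbf D})$, where $w_{X\mathbf D}$ is player~2's stationary frequency of being in $X\mathbf D$. Player~2 can push either frequency towards $1$. So, contrary to your Step~3, the sign is not fixed by $\beta,\varepsilon_1,\varepsilon_2$, however large $|\varepsilon_1-\varepsilon_2|$ is.
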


\noindent \textbf{Proof.}
We consider
\begin{equation}
\bold{U}=\alpha \bold{S}^1 + \beta \bold{S}^2 +\gamma \bold{1}=
	\begin{pmatrix}
	\bold{A_1}  \\
	\bold{A_1} 
	\end{pmatrix}
	 +
	\begin{pmatrix}
	\bold{O}  \\
	\bold{B_1} 
	\end{pmatrix}.
\end{equation}
	We denote the coefficient matrix as 
	\begin{equation*}
	M=\begin{pmatrix}
	\tilde {\bold{p}^1_{\mathcal{B}}}& \tilde {\bold{p}^1_{\mathcal{V}}}
	\end{pmatrix},
	\end{equation*}
	and the augmented matrix as
	\begin{equation*}
	\tilde{M}=\begin{pmatrix}
	\tilde {\bold{p}^1_{\mathcal{B}}} & \tilde {\bold{p}^1_{\mathcal{V}}} & \bold{A_1}
	\end{pmatrix}.
	\end{equation*}
Since all entries of $\bold{p}^1_{\mathcal{B}}$ and $\bold{p}^1_{\mathcal{V}}$ are probabilities, they lie in $[0,1]$. 
Therefore, the entries of $M$ have simple sign properties. 
In particular, the first row entries $h^1_{CC\mathbf{C}}-1$ and $g^1_{CC\mathbf{C}}-1$ have the same sign, while the entries in the third row $h^1_{DC\mathbf{C}}$ and $g^1_{DC\mathbf{C}}-1$ have different signs. 
From this structure, the two columns of $M$ are linearly independent, and we have $r(M)=2$.
Since $\tilde{M}$ is obtained by adding the column $\bold{A}_1$, the rank of $\tilde{M}$ remains 2 if and only if this new column can be expressed in the span of the columns of $M$. 
This is equivalent to Inequality~\ref{inl.1}. 
In other words, if Inequality.\ref{inl.1} is satisfied, then there exist $a, b$, not all zero, such that Eq.\ref{eq.14} or Eq.\ref{eq.15} has non-zero solutions. 
Based on Theorem.\ref{thm1}, a positive or a negative determinant strategy $(\bold{p}^1_{\mathcal{B}}, \bold{p}^1_{\mathcal{V}})$ of player $1$ exists.

\begin{remark}
For any $\mathcal{B}^ i \in \{ C, D\}, \mathcal{V}^ i \in \{ \mathbf{C}, \mathbf{D}\}$, there exist $a$ and $b$, not all zero, such that the Press–Dyson vectors satisfy Eq.\ref{eq.14} or Eq.\ref{eq.15},
if and only if there exist constants $\alpha, \beta, \gamma$, not all zero, which are unilaterally decided by the focal player $1$,
such that one of the Inequality.\ref{inl.1} is satisfied.
Thus, Theorem.\ref{thm2} is an equivalent with Theorem.\ref{thm1}.
\end{remark}

\begin{remark}
If both are focal individuals, then each can have their own $\alpha$, $\beta$, $\gamma$, which do not depend on the other player, i.e., control input.
\end{remark}

\begin{corollary}\label{cor1}
For $\varepsilon_1 = \varepsilon_2 = 0$, neither positive determinant strategies nor negative determinant strategies exist, but zero determinant strategies exist.
\end{corollary}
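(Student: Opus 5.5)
Setting $\varepsilon_1=\varepsilon_2=0$ turns the residual term in the decomposition of Theorem~\ref{thm1} into zero, which gives both halves of the statement. I would reuse the splitting $\bold{U}=\bold{f_1}+\bold{f_2}$ of Eq.~\ref{spl15} (and its counterpart Eq.~\ref{U2}) and read off what the residual and the sign identity Eq.~\ref{pnstra} become in this case.

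\textbf{Step 1: ZD strategies exist.} With $\varepsilon_1=\varepsilon_2=0$, the vectors $\bold{B_1}$ and $\bold{B_2}$ are zero for every $\beta$. Hence Eq.~\ref{U1} becomes $\bold{U}=(\bold{A_1};\bold{A_1})$, and the same holds with $\bold{A_2}$. The payoff vectors $\bold{S}^1,\bold{S}^2$ no longer depend on $\mathcal{V}^2$, and $\bold{U}$ inherits this. The game reduces to the classical prisoner's dilemma, with the value coordinate playing the role of a private randomization device for player $1$.

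I would take a classical Press--Dyson ZD choice $\tilde{\bold{p}}=\alpha\bold{S}^1+\beta\bold{S}^2+\gamma\bold{1}$ on the four behavior states, for example an equalizer or extortion strategy with small $\phi$. I would lift it to $\bold{p}^1_{\mathcal{B}}$ by making it independent of $\mathcal{V}^1$, and set $\bold{q}^1_{\mathcal{V}}$ arbitrarily, with $b=0$ and $a=1$. One must check that the resulting entries are valid probabilities. That holds for the known Press--Dyson families after scaling by a small factor; to keep the Markov chain irreducible I would choose interior entries. Lemma~\ref{lem1} then gives $\alpha s^1+\beta s^2+\gamma=0$ for every opponent strategy, so a ZD strategy exists.

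\textbf{Step 2: no positive or negative determinant strategy.} Under the hypotheses of Theorem~\ref{thm1}, Eq.~\ref{pnstra} shows that $\alpha s^1+\beta s^2+\gamma$ equals $\bold{v}\cdot\bold{f_2}$. This equals $\beta(\varepsilon_2\sum v_i-\varepsilon_1\sum v_i)=0$. So the enforced combination is exactly zero, never strictly positive or negative, and condition i) $|\varepsilon_1-\varepsilon_2|\ge C>0$ fails.

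The statement, however, asks that no strategy whatsoever satisfies the strict inequality of Definition~\ref{def.2}. Proving this is the main obstacle. My first attempt would be an opponent-dependent argument. Suppose player $1$ fixes any strategy and any $(\alpha,\beta,\gamma)$ with $\alpha s^1+\beta s^2+\gamma>0$ for all opponents. Then I would use continuity of $s^1,s^2$ in the opponent's strategy, together with the freedom of player $2$ in the classical game, to push the pair $(s^1,s^2)$ to the boundary line.

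The difficulty is that the definition as written allows trivially true inequalities, such as $\gamma>0$ with $\alpha=\beta=0$. So I would interpret the corollary as intended within the framework of Theorem~\ref{thm1}. There the strict sign comes solely from $\bold{v}\cdot\bold{f_2}$, which vanishes identically when $\varepsilon_1=\varepsilon_2=0$. I would state this interpretation explicitly and conclude that only the zero case survives.
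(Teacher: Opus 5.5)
Your argument is the same as the paper's. With $\varepsilon_1=\varepsilon_2=0$ the residual $\mathbf{f_2}$ (and its counterpart built from $\mathbf{B_2}$) vanishes, so $\mathbf{U}=(\mathbf{A_1};\mathbf{A_1})$, Lemma~\ref{lem1} gives $\mathbf{v}\cdot\mathbf{U}=0$, and the construction of Theorem~\ref{thm1} can only produce equality. For the ZD half, the paper only asserts that a strategy with valid probabilities exists; your lift of a classical Press--Dyson strategy (with $b=0$ and interior $\mathbf{q}^1_{\mathcal{V}}$) is an explicit witness.

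Your caveat is correct, and it is stronger than you state: under Definition~\ref{def.2} the non-existence half is literally false even with $\beta\neq 0$. Since $S\le s^2\le T$, every strategy enforces $s^2-(T+1)<0$. Likewise, any ZD strategy enforcing $\alpha s^1+\beta s^2+\gamma=0$ also enforces $\alpha s^1+\beta s^2+\gamma+1>0$, so the two halves of the corollary contradict each other as written. Drop the ``push to the boundary'' attempt, which cannot succeed. The restricted reading you adopt, where a strict sign can come only from $\mathbf{v}\cdot\mathbf{f_2}$, is exactly what the paper's proof establishes.
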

\noindent \textbf{Proof.}
If $\varepsilon_1 = \varepsilon_2 = 0$, then $\bold{U} = (\bold{A_1}^T, \bold{A_1}^T)^T = (\bold{A_2}^T, \bold{A_2}^T)^T$.
Thus, Eqs.\ref{U1}, \ref{U2} are the same as Eq.\ref{spl13}.
According to Lemma.\ref{lem1}, $\bold{v} \cdot \bold{U} = 0$.
Thus, zero determinant strategies exist, but positive determinant strategies and negative 

\begin{corollary}\label{cor2}
(Extortionate positive determinant strategy) For the two-player two-strategy repeated game given by Eq.\ref{game1} with $T > \varepsilon_2 > R > \varepsilon_1 > 0 > P > S$, if there exists a sufficiently large constant $C > 0$, such that $\varepsilon_2 -\varepsilon_1 \ge C$,
then individual $1$ is able to take a positive determinant strategy $(\bold{p}^1_{\mathcal{B}}, \bold{p}^1_{\mathcal{V}})$ to have more payoff than the opponent, that is, $s^1 > \chi s^2$, in which $\chi \ge \max \{ \frac{R}{R-\varepsilon_1}, \frac{P-\varepsilon_2}{P} \}>1$.
\end{corollary}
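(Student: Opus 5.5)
We obtain Corollary~\ref{cor2} by specializing Theorem~\ref{thm1} (equivalently Theorem~\ref{thm2}). We take $\alpha = 1$, $\beta = -\chi$, $\gamma = 0$, so that $\bold{U} = \bold{S}^1 - \chi \bold{S}^2$ and the desired conclusion $s^1 > \chi s^2$ is exactly $\alpha s^1 + \beta s^2 + \gamma > 0$. The sign case we need is case iv) in the proof of Theorem~\ref{thm1}, with $\beta<0$. As stated, however, case iv) requires $\varepsilon_1-\varepsilon_2 \ge C$, while the corollary assumes $\varepsilon_2 - \varepsilon_1 \ge C$. So we first redo the sign analysis of Eq.~\ref{pnstra} for the decomposition that actually fits our data.

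\textbf{Step 1 (choosing the split).} We compute the entries of $\bold{U}$ from Eq.~\ref{eq8}. On the first eight states (opponent value $\mathbf{C}$) and the last eight (opponent value $\mathbf{D}$), the two halves differ only in the $\varepsilon$-corrections carried by $\bold{S}^2$. This matches $\bold{f_2} = (\bold{O}, \bold{B_1})$ in Eq.~\ref{U1} or $(\bold{B_2},\bold{O})$ in Eq.~\ref{U2}. With $\beta=-\chi<0$, Eq.~\ref{pnstra} (respectively its analogue for Eq.~\ref{U2}) gives
\[
\alpha s^1+\beta s^2 = \chi\bigl(\varepsilon_1 \textstyle\sum_{\rm odd} v_i - \varepsilon_2 \sum_{\rm even} v_i\bigr)
\]
in the first case, and the reversed-sign expression over $v_1,\dots,v_8$ in the second case. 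We select whichever split makes the coefficient of $\varepsilon_2$ positive. Then, once $\varepsilon_2-\varepsilon_1 \ge C$ with $C$ large, the residual is positive, provided the relevant stationary masses are bounded below. That lower bound holds because $\bold{v}$ has strictly positive entries for non-deterministic strategies (Remark~\ref{remm}).

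\textbf{Step 2 (feasibility of $\bold{f_1}$).} We must exhibit probabilities $\bold{p}^1_{\mathcal{B}},\bold{p}^1_{\mathcal{V}}\in(0,1)^8$ and scalars $a,b$ with $a\tilde{\bold{p}}^1_{\mathcal{B}} + b\tilde{\bold{p}}^1_{\mathcal{V}} = \bold{A_1}$ (or $\bold{A_2}$). We do this through the sign conditions in Inequality~\ref{inl.1} of Theorem~\ref{thm2}. For each opponent state we evaluate $U_{C\mathcal{B}^2\mathbf{C}\mathcal{V}^2} = R-\chi R$ or $S-\chi(T-\varepsilon_2)$, $U_{D\mathcal{B}^2\mathbf{D}\mathcal{V}^2} = T-\chi S$ or $P-\chi(P-\varepsilon_2)$, and so on. We then check that under $T>\varepsilon_2>R>\varepsilon_1>0>P>S$ and $\chi\ge\max\{R/(R-\varepsilon_1),(P-\varepsilon_2)/P\}$ one pattern, (a)--(d), holds uniformly. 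The two thresholds in $\chi$ are precisely what is needed to sign the entries $R-\varepsilon_1 - \chi R$ and $P-\varepsilon_2-\chi P$ consistently with their neighbours. Here $P<0$ makes $(P-\varepsilon_2)/P>1$, and $R>\varepsilon_1$ makes $R/(R-\varepsilon_1)>1$. Given a uniform sign pattern, Theorem~\ref{thm2} supplies $a,b$, and Lemma~\ref{lem1} gives $\bold{v}\cdot\bold{f_1}=0$.

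\textbf{Step 3 (conclusion).} Combining Steps 1 and 2 through Eq.~\ref{payoff} gives $s^1-\chi s^2 = \bold{v}\cdot\bold{f_2} >0$ for every opponent strategy, which is the claim. Since $\chi>1$ and the payoffs involved are of either sign, we interpret ``more payoff'' as the enforced inequality $s^1>\chi s^2$.

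The main obstacle is Step 2: the sixteen sign checks are tedious, and some entries, e.g.\ $T-\chi S$ versus $S-\chi(T-\varepsilon_2)$, have fixed signs that may clash with the required pattern. If no pattern (a)--(d) holds exactly, we would try scaling $\gamma$ slightly away from $0$ to move borderline entries across zero, and absorb the resulting $\gamma$-term into the strict margin provided by large $\varepsilon_2-\varepsilon_1$. A secondary concern is Step 1: the positivity needs a uniform lower bound on the stationary weights, and we would state explicitly that $C$ may depend on the strategies.
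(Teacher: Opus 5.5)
Your route is the paper's own. You take $\alpha=1,\beta=-\chi,\gamma=0$, check a sign pattern of Inequality~\ref{inl.1}, obtain $\mathbf{v}\cdot\mathbf{f_1}=0$ from Lemma~\ref{lem1}, and then sign the residual. You are right that the split must be Eq.~\ref{U2}: with Eq.~\ref{U1} you would be in case iii) of Theorem~\ref{thm1}, which gives the opposite sign. The paper's Example 3 strategy is in fact exactly $\tilde{\mathbf{p}}^1_{\mathcal{B}}=\mathbf{A_2}/50$ with $b=0$.

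Step 2 is not the obstacle. Pattern (c) holds in all sixteen entries: $U\le 0$ whenever $\mathcal{B}^1=C$ and $U\ge 0$ whenever $\mathcal{B}^1=D$. The threshold $R/(R-\varepsilon_1)$ is what signs $U_{CC\mathbf{C}\mathbf{D}}=R-\chi(R-\varepsilon_1)$; the entry $R-\varepsilon_1-\chi R$ is negative anyway. So $b=0$ with $a$ large already works, and no shift of $\gamma$ is needed.

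The gap is Step 1, and it is fatal for this split. For Eq.~\ref{U2} the residual is
\[
s^1-\chi s^2=\chi\bigl(\varepsilon_2\Pr[\mathcal{B}^2=D,\mathcal{V}^2=\mathbf{C}]-\varepsilon_1\Pr[\mathcal{B}^2=C,\mathcal{V}^2=\mathbf{C}]\bigr).
\]
Strict positivity of each $v_i$ gives no bound on the ratio of these two masses, and player~2 controls that ratio. Suppose player~2 sets every entry of $\mathbf{p}^2_{\mathcal{B}}$ and $\mathbf{p}^2_{\mathcal{V}}$ to $1-\delta$. Then her behavior and value are independent Bernoulli$(1-\delta)$ each round, and the residual is $\chi(1-\delta)\bigl(\varepsilon_2\delta-\varepsilon_1(1-\delta)\bigr)$. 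This is negative for $\delta<\varepsilon_1/(\varepsilon_1+\varepsilon_2)$, however large $\varepsilon_2-\varepsilon_1$ is. As $\delta\to0$ it tends to $-\chi\varepsilon_1$, which does not depend on $\varepsilon_2$ at all; with Example 3's numbers, $s^1-5s^2\to-0.05$.

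Letting $C$ depend on the opponent's strategy, as you propose, abandons ``regardless of the opponent.'' The game and player~1's strategy are fixed before player~2 chooses. The paper's proof stops at the sign check and relies on Theorem~\ref{thm1}'s case analysis, so it has the same hole and offers no way around it.

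What rescues the statement is a different split, chosen so that the residual is entrywise nonnegative. Put $x_i=\min(U_i,U_{i+8})$ for $i=1,\dots,8$. This is player~1's payoff minus $\chi$ times player~2's material payoff (without internal cost), so it depends only on what player~1 observes:
\[
\mathbf{x}=\bigl(R(1-\chi),\,S-\chi T,\,T-\varepsilon_2-\chi S,\,P(1-\chi)-\varepsilon_2,\,R(1-\chi)-\varepsilon_1,\,S-\varepsilon_1-\chi T,\,T-\chi S,\,P(1-\chi)\bigr).
\]
This vector is $\le 0$ where $\mathcal{B}^1=C$ and $\ge 0$ where $\mathcal{B}^1=D$. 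The fourth entry needs exactly $\chi\ge(P-\varepsilon_2)/P$. Hence $a\tilde{\mathbf{p}}^1_{\mathcal{B}}=\mathbf{x}$ is solvable with $b=0$ and $a$ large.

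The difference $\mathbf{U}-(\mathbf{x};\mathbf{x})$ equals $\chi\varepsilon_2$ on states with $(\mathcal{B}^2,\mathcal{V}^2)=(D,\mathbf{C})$, $\chi\varepsilon_1$ on states with $(C,\mathbf{D})$, and $0$ elsewhere. Therefore $s^1-\chi s^2=\chi\bigl(\varepsilon_2\Pr[D,\mathbf{C}]+\varepsilon_1\Pr[C,\mathbf{D}]\bigr)$, which is $\chi$ times player~2's expected internal cost. It is positive against every non-deterministic opponent, with no condition on $\varepsilon_2-\varepsilon_1$ and without the threshold $R/(R-\varepsilon_1)$.
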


\noindent \textbf{Proof.}
Individual $1$ unilaterally controls two players' average payoff to fulfill $s^1 - \chi s^2 > 0$.
The affine relation vector $\bold{U} = \bold{S}^1 - \chi \bold{S}^2$ is 
\begin{equation*}
\begin{split}
		& (R (1 - \chi), 
		S - \chi (T - \varepsilon_2), 
		T - \varepsilon_2 -\chi S, 
		(P - \varepsilon_2)(1 - \chi),\\&
		R (1 - \chi) - \varepsilon_1,  
		S - \varepsilon_1 - \chi (T - \varepsilon_2),
		T - \chi S, 
		P (1 - \chi) +\chi \varepsilon_2,\\&
		R (1 - \chi) +\chi \varepsilon_1,
		S - \chi T,
		T  - \varepsilon_2 - \chi (S - \varepsilon_1),
		P (1 - \chi) - \varepsilon_2,\\&
		(R - \varepsilon_1) (1 - \chi),
		S - \varepsilon_1 - \chi T,
		T - \chi (S - \varepsilon_1),
		P (1 - \chi))^T.
\end{split}
\end{equation*}
For $T > \varepsilon_2 > R > \varepsilon_1 > 0 > P > S$ and $\chi \ge \max \{ \frac{R}{R-\varepsilon_1}, \frac{P-\varepsilon_2}{P} \}>1$, we have 
	$U_{C C \mathbf{C} \mathbf{C}} \leq 0$,
	$U_{C D \mathbf{C} \mathbf{C}} \leq 0$,
	$U_{D C \mathbf{C} \mathbf{C}} \ge 0$,
	$U_{D D \mathbf{C} \mathbf{C}} \ge 0$,
	$U_{C C \mathbf{D} \mathbf{C}} \leq 0$,
	$U_{C D \mathbf{D} \mathbf{C}} \leq 0$,
	$U_{D C \mathbf{D} \mathbf{C}} \ge 0$,
	$U_{D D \mathbf{D} \mathbf{C}} \ge 0$,
	$U_{C C \mathbf{C} \mathbf{D}} \leq 0$,
	$U_{C D \mathbf{C} \mathbf{D}} \leq 0$,
	$U_{D C \mathbf{C} \mathbf{D}} \ge 0$,
	$U_{D D \mathbf{C} \mathbf{D}} \ge 0$,
	$U_{C C \mathbf{D} \mathbf{D}} \leq 0$,
	$U_{C D \mathbf{D} \mathbf{D}} \leq 0$,
	$U_{D C \mathbf{D} \mathbf{D}} \ge 0$,
	$U_{D D \mathbf{D} \mathbf{D}} \ge 0$.
It satisfies the condition ii) in the Theorem.\ref{thm2}.
Thus, there exist a positive determinant strategy $(\bold{p}^1_{\mathcal{B}}, \bold{p}^1_{\mathcal{V}})$, such that the focal individual $1$ gets more payoff than the opponent, that is, $s^1 > \chi s^2$, in which $\chi \ge \max \{ \frac{R}{R-\varepsilon_1}, \frac{P-\varepsilon_2}{P} \}>1$.

\begin{remark}
Our extortionate positive determinant strategy given by Corollary.\ref{cor2} is formally more powerful than the extortionate ZD strategy.
Previous work found an extortionate ZD strategy, which is defined in \cite{Press_2012}:
If an individual adopts an extortionate ZD strategy, her increase of payoff exceeds that of the opponent by a fixed percentage $\chi$.
However, if an individual adopts a extortionate positive determinant strategy, her payoff exceeds that of the opponent by a percentage not less than $\chi > 1$.
\end{remark}

\section{Examples and applications in control systems}
\label{sec6}
We give some examples to show how a single focal individual using positive/negative determinant strategies performs payoff control.

\noindent \textbf{Example 1.}\label{eg1}
\textbf{The focal individual unilaterally controls the opponent’s payoff below a given value.}
Focal individual $1$ wants to unilaterally control the average payoff of individual $2$ below the value $1$, that is, $\alpha = 0, \beta = 1, \gamma = -1$.
We give a negative determinant strategy for such payoff control.

We assume $(R, S, T, P, \varepsilon_1, \varepsilon_2)$ = $(3, 0, 5, 1, 0.01, 2)$. 
If the focal individual $1$ adopts a negative determinant strategy:
\begin{equation}\label{example1-1}
\begin{split}
&\bold{p}^1_{\mathcal{B}}= (0.602, 0.2, 0.202, 0, 0.602, 0.2, 0.202, 0), \\ 
&\bold{p}^1_{\mathcal{V}} =(0.5, 0.5, 0.5, 0.5, 0.5, 0.5, 0.5, 0.5),
\end{split}
\end{equation} 
then the average payoff of individual $2$ is always smaller than the value $1$ (i.e., $s^2 < 1$), regardless of the strategies adopted by individual $2$ (See Fig.~\ref{fig2}(a)).
\begin{figure}
	\begin{minipage}{1\linewidth} 
	\centerline{\includegraphics[width=1\linewidth]{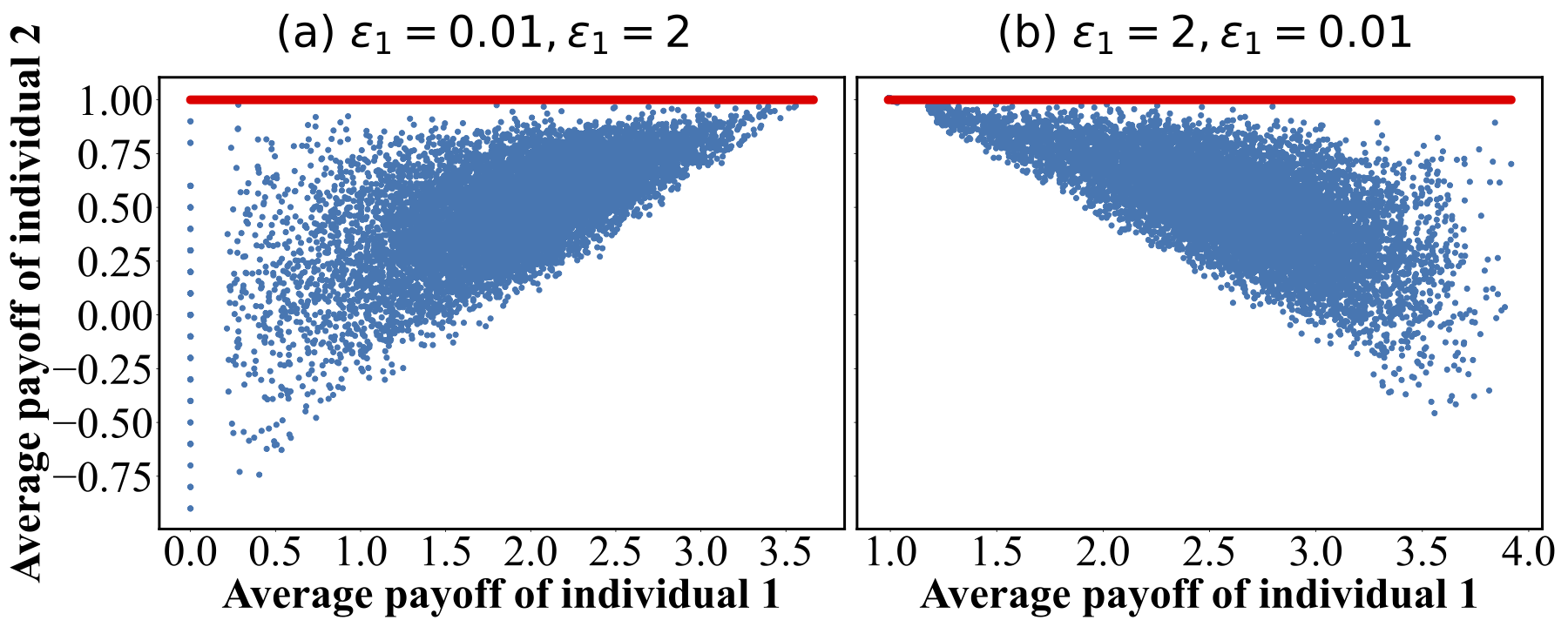}}
	\end{minipage} 
	\caption{The focal individual unilaterally controls the opponent’s payoff below $1$.
	The red lines denote the affine relation $s^2 -1 = 0$, in which $s^2$ is the average payoff of individual $2$. 
	(a) The internal costs are $\varepsilon_1 = 0.01$, $\varepsilon_2 = 2$. 
	Individual $1$ adopts the negative determinant strategy given by Eq.~\ref{example1-1}. 
	(b) The internal costs are $\varepsilon_1 = 2$, $\varepsilon_2 = 0.01$. 
	Individual $1$ adopts the negative determinant strategy given by Eq.~\ref{example1-2}. 
	The red lines denote the affine relation $s^2 -1 = 0$. 
	The blue dots indicate the average payoffs for both individuals
	, where strategies ($\bold{p}^2_{\mathcal{B}}, \bold{p}^2_{\mathcal{V}}$) of individual $2$ are sampled $10^4$ times. 
	For each strategy pair ($\bold{p}^1_{\mathcal{B}}, \bold{p}^1_{\mathcal{V}}$, $\bold{p}^2_{\mathcal{B}}, \bold{p}^2_{\mathcal{V}}$), we compute the average payoffs of both individuals using the determinant formula provided in Eq.\ref{payoff1}.
	Parameters: $(R, S, T, P) = (3, 0, 5, 1)$.}\label{fig2}
\end{figure}

If $\varepsilon_1 = 2, \varepsilon_2= 0.01$. 
We find a negative determinant strategy of focal individual $1$ that controls the average payoff of individual $2$ is always smaller than the value $1$, regardless of the strategies adopted by individual $2$ (See Fig.~\ref{fig2}(b)). The strategy is given by: 
\begin{equation}\label{example1-2}
\begin{split}
&\bold{p}^1_{\mathcal{B}}= (0.6, 0.202, 0.2, 0.002, 0.6, 0.202, 0.2, 0.002), \\ 
&\bold{p}^1_{\mathcal{V}} =(0.5, 0.5, 0.5, 0.5, 0.5, 0.5, 0.5, 0.5).
\end{split}
\end{equation} 

\noindent \textbf{Example 2.}\label{eg2}
\textbf{The focal individual unilaterally controls her own increasement of payoff more than that of the opponent.}
	The focal individual $1$ unilaterally controls two players' average payoff to fulfill $s^1 - k > \chi (s^2 - k)$, in which $\chi \ge 1$ denotes the extortion factor and $k$ denotes the baseline of extortion. 
	The extortion factor $\chi$ measures the fixed percentage that individual $1$'s increase of payoff exceeds that of $2$, and the baseline $k$ affects the uncertainty of extortion \cite{Hao_2015}.
Focal individual $1$ wants to unilaterally control two players' average payoff to fulfill $s^1 - 1.5 > 3 (s^2 - 1.5)$. (In this case, we set $\chi = 3, k = 1.5$, which is equivalent to $\alpha = 1, \beta = -3, \gamma = 3$).
We give a positive determinant strategy for such payoff control.

We assume $(R, S, T, P, \varepsilon_1, \varepsilon_2)$ = $(3, 0, 5, 1, 0.01, 0.9)$. 
If individual $1$ uses a positive determinant strategy 
\begin{equation}\label{example2-1}
\begin{split}
\bold{p}^1_{\mathcal{B}} &= (0.8515, 0.4, 0.3565, 0.005, 0.851, 0.3995, 0.4015, 0.05), \\ 
\bold{p}^1_{\mathcal{V}} &= (0.5, 0.5, 0.5, 0.5, 0.5, 0.5, 0.5, 0.5),\notag
\end{split}
\end{equation} 
then average payoffs of two players fulfill $s^1 - 1.5 > 3 (s^2 - 1.5)$, regardless of the strategies adopted by individual $2$ (See Fig.~\ref{fig3}(a)).
\begin{figure}
	\begin{minipage}{1\linewidth} 
	\includegraphics[width=1\linewidth]{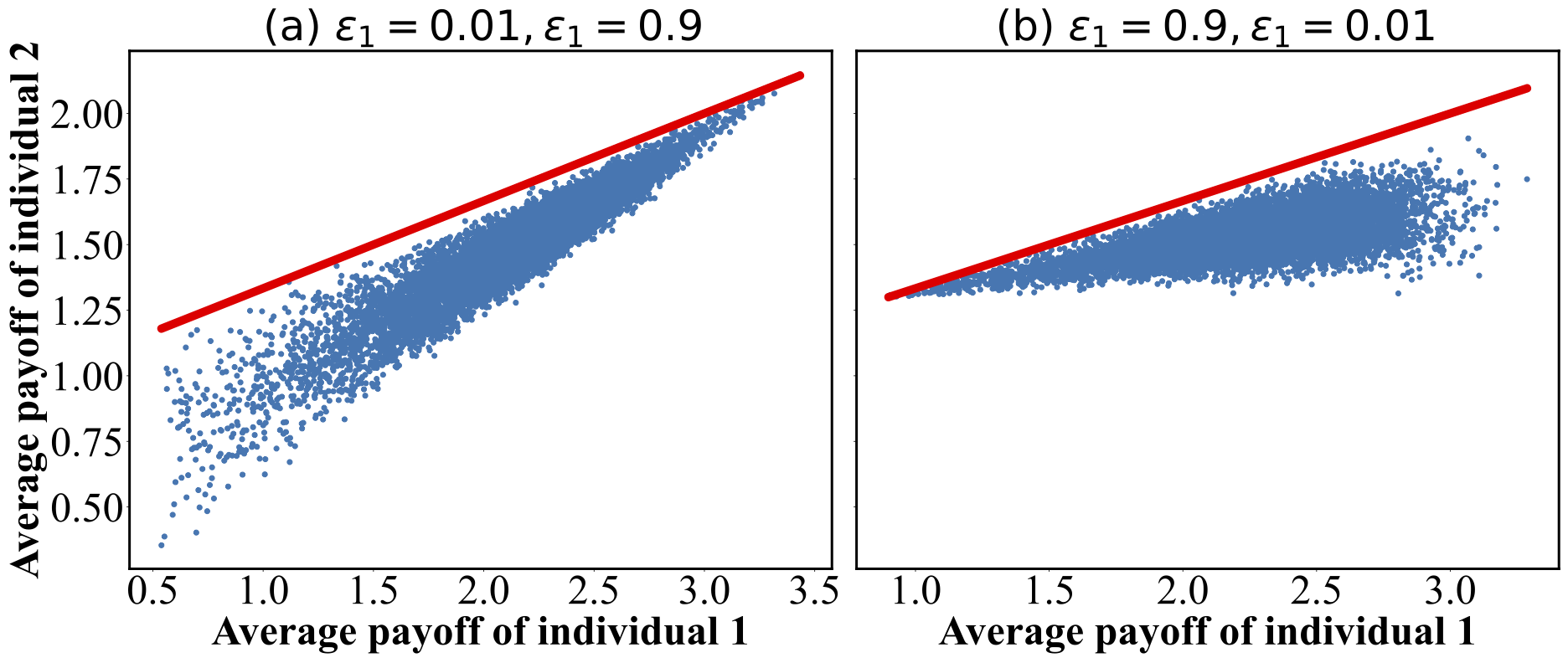}
	\end{minipage}
	\caption{\label{fig3} 
	Individual $1$ unilaterally controls two players' average payoff fulfilling $s^1 - 1.5 > 3 (s^2 - 1.5)$, where $s^1, s^2$ are the average payoffs of individuals $1$ and $2$, respectively. 
	(a) The internal costs are $\varepsilon_1 = 0.01$, $\varepsilon_2 = 0.09$. 
	(b) The internal costs are $\varepsilon_1 = 0.09$, $\varepsilon_2 = 0.01$. 
	The red lines denote the affine relation $s^1 -1.5 = 3 (s^2 - 1.5)$. 
	The blue dots indicate the average payoffs for both individuals.
	Parameters: $(R, S, T, P) = (3, 0, 5, 1)$.}
\end{figure}

We assume $\varepsilon_1 = 0.9$ and $\varepsilon_2 = 0.01$.
If individual $1$ uses a positive determinant strategy
\begin{equation}\label{example2-2}
\begin{split}
\bold{p}^1_{\mathcal{B}} &= (0.85, 0.4015, 0.3995, 0.051, 0.805, 0.3565, 0.4, 0.0515), \\ 
\bold{p}^1_{\mathcal{V}}  &=(0.5, 0.5, 0.5, 0.5, 0.5, 0.5, 0.5, 0.5),\notag
\end{split}
\end{equation} 
then two players' average payoff fulfill $s^1 - 1.5 > 3 (s^2 - 1.5)$, regardless of the strategies adopted by individual $2$ (See Fig.~\ref{fig3}(b)).

\noindent \textbf{Example 3.}\label{eg3}
\textbf{The focal individual unilaterally controls her payoff more than that of the opponent.}
	The focal individual $1$ unilaterally controls two players' average payoff fulfilling $s^1 > \chi s^2$, in which $\chi \ge \max \{ \frac{R}{R-\varepsilon_1}, \frac{P-\varepsilon_2}{P} \} > 1$ denotes the extortion factor. 
Focal individual $1$ wants to unilaterally control two players' average payoff fulfilling $s^1 > 5 s^2$. (In this case, it is equivalent to $\alpha = 1, \beta = -5, \gamma = 0$).

We consider a case with parameters $(R, S, T, P, \varepsilon_1, \varepsilon_2) = (2, -3, 6, -1, 0.01, 4)$ according to Corollary.\ref{cor2}.
We find a positive determinant strategy of focal individual $1$ 
\begin{small}
\begin{equation}\label{example3}
\begin{split}
\bold{p}^1_{\mathcal{B}}&= (0.841, 0.34, 0.341, 0, 0.8408, 0.3398, 0.421, 0.08), \\ 
\bold{p}^1_{\mathcal{V}}&= (0.5, 0.5, 0.5, 0.5, 0.5, 0.5, 0.5, 0.5).
\end{split}
\end{equation} 
\end{small}
Then two players' average payoff fulfill $s^1 > 5 s^2$, regardless of the strategies adopted by individual $2$ (See Fig.~\ref{fig4}(a)).
Furthermore, for individual $1$ to adopt strategy given by Eq.~\ref{example3}, her payoff must be five times more than individual $2$'s payoff, and there is even a 0.99 probability that it is $100$ times more than individual $2$'s payoff (See Fig.~\ref{fig4}(b)).
Thus, the extortionate positive determinant strategy is formally more powerful.
\begin{figure}
	\centering
	\begin{minipage}{1\linewidth} 
	\includegraphics[width=1\linewidth]{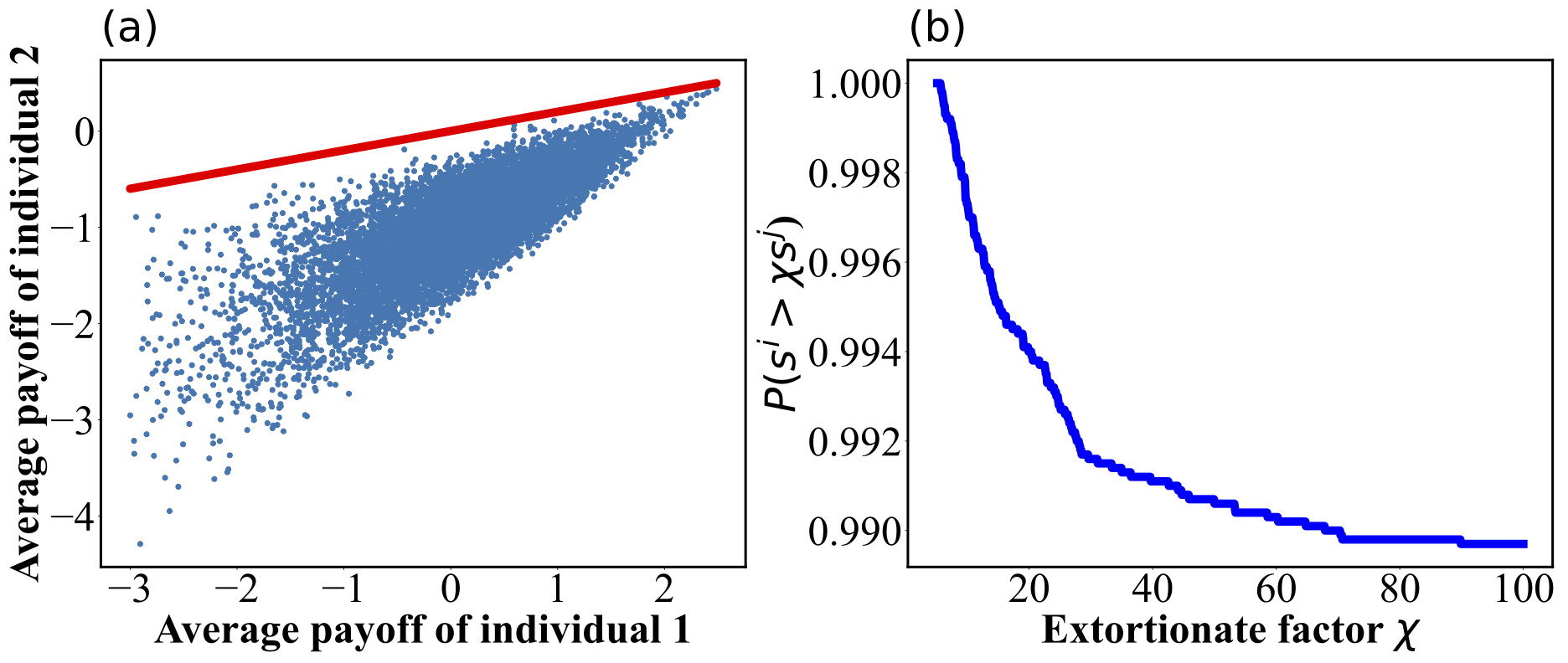}
	\end{minipage}
	\caption{\label{fig4} 
	(a) Individual $1$ unilaterally controls her payoff more than five times of that of the opponent, i.e., $s^1 > 5 s^2$. 
	The red line denotes the affine relation $s^1=  5 s^2$. 
	(b) The probability of $s^1 > \chi s^2$ ($5 < \chi < 100$) is more than $0.99$.
	It implies that for individual $1$ with strategy given by Eq.~\ref{example3}, her payoff must be five times more than individual $2$'s payoff, and there is even a probability of more than $0.99$ that it would be $100$ times the payoff of an individual $2$.
	In two panels, focal individual $1$ adopts the positive determinant strategy given by Eq.~\ref{example3}. 
	The strategies ($\bold{p}^2_{\mathcal{B}}, \bold{p}^2_{\mathcal{V}}$) of individual $2$ are sampled $10^4$ times. 
	For each strategy pair ($\bold{p}^1_{\mathcal{B}}, \bold{p}^1_{\mathcal{V}}$, $\bold{p}^2_{\mathcal{B}}, \bold{p}^2_{\mathcal{V}}$), we compute the average payoffs of both individuals using the determinant formula provided in Eq.\ref{payoff1}.
Parameters: $(R, S, T, P, \varepsilon_1, \varepsilon_2) = (2, -3, 6, -1, 0.01, 4)$.
}
	\vspace*{-5pt}
\end{figure}

Here, both individuals act simultaneously using stochastic memory-one strategies (all entries are between $0$ and $1$), allowing each to be the focal individual in our framework. 
Here, we show how payoff control is achieved by both individuals using positive/negative determinant strategies simultaneously.

\noindent \textbf{Example 4.}
\textbf{Both players have the positive determinant strategies.}
We consider a game with $(R, S, T, P, \varepsilon_1, \varepsilon_2)$ = $(3, 0, 5, 1, 0.01, 0.9)$. Individual $1$ wants to unilaterally control the average payoffs to satisfy $s^1 – 3 s^2 + 3 > 0,$ and individual $2$ wants to unilaterally control the average payoffs to satisfy $-3  s^1 + s^2+ 3 > 0$.
In this case, the individual $1$ adopts the strategy:
\begin{equation*}
\begin{split}
\bold{p}^1_{\mathcal{B}} = (&0.8020, 0.2000, 0.4753, 0.0067, 0.8013, 0.1993, 0.5353, 0.0667),\\
\bold{p}^1_{\mathcal{V}}= (&0.5, 0.5, 0.5, 0.5, 0.5, 0.5, 0.5, 0.5).\\
\end{split}
\end{equation*}
And the individual $2$ adopts the strategy:
\begin{equation*}
\begin{split}
\bold{p}^2_{\mathcal{B}} = (&0.8020, 0.2000, 0.4753, 0.0067, 0.8013, 0.1993, 0.5353, 0.0667),\\
\bold{p}^2_{\mathcal{V}}= (&0.5, 0.5, 0.5, 0.5, 0.5, 0.5, 0.5, 0.5).\\
\end{split}
\end{equation*}
Then the average payoff of first individual $s^1 = 0.8090$, and that of the second individual $s^2 = 0.8594$. It satisfies that $s^1 – 3 s^2 + 3 = 1.2308 > 0$ and $-3  s^1 + s^2+ 3 = 1.4323 > 0$. 
Thus, the strategies are positive determinant strategies for both individuals.

\noindent \textbf{Example 5.}\label{eg5}
\textbf{One of the players has the positive determinant strategy, while the opponent has the negative determinant strategy.}
We still consider parameters $(R, S, T, P, \varepsilon_1, \varepsilon_2)$ = $(3, 0, 5, 1, 0.01, 0.9)$. Individual $1$ wants to unilaterally control the average payoffs to satisfy $s^1 – 3 s^2 + 3 < 0,$ and individual $2$ wants to unilaterally control the average payoffs to satisfy $-3  s^1 + s^2+ 3 > 0$.
In this case, the individual $1$ adopts the strategy:
\begin{equation*}
\begin{split}
\bold{p}^1_{\mathcal{B}} = (&0.8, 0.3800, 0.4733, 0.1867, 0.7993, 0.3793, 0.5333, 0.2467);\\
\bold{p}^1_{\mathcal{V}}= (&0.5, 0.5, 0.5, 0.5, 0.5, 0.5, 0.5, 0.5).\\
\end{split}
\end{equation*}
And the individual $2$ adopts the strategy:
\begin{equation*}
\begin{split}
\bold{p}^2_{\mathcal{B}} = (&0.8020, 0.2000, 0.4753, 0.0067, 0.8013, 0.1993, 0.5353, 0.0667);\\
\bold{p}^2_{\mathcal{V}}= (&0.5, 0.5, 0.5, 0.5, 0.5, 0.5, 0.5, 0.5).\\
\end{split}
\end{equation*}
Then the average payoff of first individual $s^1 = 1.1993$, and that of the second individual $s^2 = 1.7362$. It satisfies that $s^1 – 3 s^2 + 3 = -1.0092 < 0$ and $-3  s^1 + s^2+ 3 = 1.1383 > 0$. 
Thus, the strategies of individuals $1$ and $2$ are negative determinant strategy and positive determinant strategy, respectively.

\section{Conclusion}
\label{sec5}
We have established a two-player repeated game by assuming that an individual pays the internal cost if her behavior is not consistent with her internal thought. 
We have shown that positive and negative determinant strategies are able to unilaterally control the payoffs.
It provides an efficient way to payoff control although the classical ZD strategies is absent, which opens an avenue for payoff control in a general setting.
Payoff control is more often refereed as Zero-determinant strategy (ZD strategy) \cite{hao2018payoff, shi2025payoff, Govaert2021}. 
This is because it is obtained via making the determinant of a matrix zero \cite{Press_2012}. 
The reason why we do not use this terminology (zero-determinant strategy) is that the determinant in our work is not zero.
Instead, we use payoff control.
We also have shown that the newly found extortionate positive determinant strategies are more powerful for payoff control than extortionate ZD strategies.

Our proposed game is a generalization of previous works.
In classical cooperative game setting, the action gives rise to the strategy set. This is also true in existing literatures on repeated games. 
In our model, a strategy refers to a pair $(X, Y)$, where $X$ is the behavior and $Y$ is the internal thought. 
This leads to a larger strategy set. 
Consequently, individuals can have different payoffs even though they choose the same action.
This difference can arise from
 spatial structure \cite{nowak_1992, ohtsuki_2006},
 strategy diversity \cite{archetti2012game, szolnoki2016competition} and vary environments \cite{weitz_2016, su2019evolutionary}.
Here, the difference arises from inconsistency between internal thoughts and actions \cite{Traulsen_2023, Liu_2022}.
In fact, under social pressure or other psychological factors, individuals can do what they do not want to, or they do not do what they want to \cite{Asch_1956, Schwarz_2000, 10.2307/24936719}.
Thus, the inconsistency is widespread in systems with highly intelligent agents.
On the other hand, our model also explicitly takes noise into account for decision making.
Noteworthily, the noise is not in the payoff matrix, but in the strategy updating. 
For example, the tit-for-tat strategy involves cooperating in the first round and then exactly repeating the opponent's previous behavior, with reactive probabilities being $0$ or $1$. The noisy tit-for-tat strategy introduces probabilities (neither $0$ nor $1$), accounting for potential errors. 
However, the noise-induced strategy profile is much more complicated for our proposed game, since the strategy set of our proposed game is larger than that of the classical games.

Technically, our work offers new insights to payoff control beyond classical zero determinant strategies.
After introducing non-zero internal costs, we split the affine combination of payoffs into two parts, one of which refers to Press-Dyson vectors which is closely related to the ZD strategy, whereas the other of which vanishes in ZD strategies. 
The name positive determinant/negative determinant strategy is the summary to show how technically different our strategy differs from zero-determinant strategy. 
This technical improvement paves the way for payoff control, even when ZD strategy is absent. 
Therefore, our analysis helps design payoff control strategies in systems where agents have values besides actions (including humans and AI). 

The novelty of our work has two sides compared with the literatures in the past decades:
One is that we introduce value into our model, creating a more complex game, in which individuals have both values and actions together with the inconsistency cost in the utility function.
Noteworthily, this framework differs from uncertain payoffs in Bayesian game.
In our setting, individuals are sure about their own value (cooperation or not), they know nothing about other’s value and they do not even try to know. 
The other novelty lies in identifying a time-invariant strategy for unilateral payoff control in such complex behavior-value games.
And we also prove that the classical payoff control strategies do not exist in such a complex game scenario.
The existence of such strategy is not trivial, but what is even more non-trivial is the time-invariance of the strategy. 
There is no need to adjust the individual's payoff to motivate her to achieve the goal (e.g., utility design), nor is there a need to optimize the path to the goal as the number of game rounds increases (e.g., reward shaping).

Our theoretical framework holds promise for future inventions, including but not limited to the following research directions: 
i) extending two-player two-strategy repeated game to multiplayer multi strategy repeated games;
ii) considering more than three internal thoughts and assessing their resilience to uncertainty and noise in payoff environments;
iii) analyzing the evolutionary stability of the positive/negative strategies.
iv) designing reward and punishment mechanisms that adjust the individual’s payoff in each round based on their behavior and values, in order to explore how time-invariant strategies influence long-term outcomes;
v) examining settings in which the focal individual adopts the proposed time-invariant strategy, while the opponent follows a specific learning strategy (e.g., reinforcement learning) in each round;
vi) integrating Bayesian game theory could offer further insights, especially in scenarios where agents try to estimate each other's internal values and update these beliefs over time.

To sum up, we have shown a class of positive (negative) determinant strategies for games with behavior-value inconsistency, which is more powerful than extortionate ZD strategy.
This opens up an avenue for the coevolution of behavior and inner state.

\nocite{*}

\bibliographystyle{plain}
\bibliography{main}

\end{document}